\newcommand\ie{{\em i.e.}~}
\newcommand\etc{{\em etc.}}
\def\B{\mathscr B}
\def\C{\mathbb C}
\def\D{\mathscr D}
\def\F{\mathscr F}
\def\G{\mathcal G}
\def\H{\mathcal H}
\def\N{\mathbb N}
\def\R{\mathbb R}
\def\S{\mathscr S}
\def\v{\varphi}
\def\0{\boldsymbol 0}
\def\12{{\textstyle\frac12}}
\def\<{\left\langle}
\def\>{\right\rangle}
\def\({\left(}
\def\){\right)}
\def\[{\left[}
\def\]{\right]}
\def\dom{\mathcal D}
\def\lone{\mathsf{L}^{\:\!\!1}}
\def\ltwo{\mathsf{L}^{\:\!\!2}}
\def\e{\mathop{\mathrm{e}}\nolimits}
\def\d{\mathrm{d}}
\def\slim{\mathop{\hbox{\rm s-}\lim}\nolimits}
\newtheorem{Theorem}{Theorem}[section]
\newtheorem{Remark}[Theorem]{Remark}
\newtheorem{Lemma}[Theorem]{Lemma}
\newtheorem{Assumption}[Theorem]{Assumption}
\newtheorem{Proposition}[Theorem]{Proposition}\newtheorem{Definition}[Theorem]{Definition}
\begin{document}

%--------------------------------------------------------------------------------------
% Title
%--------------------------------------------------------------------------------------

\title{Time delay is a common feature of quantum scattering theory}

\author{S. Richard$^1$\footnote{On leave from Universit\'e de Lyon; Universit\'e
Lyon 1; CNRS, UMR5208, Institut Camille Jordan, 43 blvd du 11 novembre 1918, F-69622
Villeurbanne-Cedex, France.}\hspace{4pt}\footnote{Supported by the Swiss National
Science Foundation.}~~and R. Tiedra de Aldecoa$^2$\footnote{Supported by the Fondecyt
Grant 1090008 and by the Iniciativa Cientifica Milenio ICM P07-027-F ``Mathematical
Theory of Quantum and Classical Magnetic Systems".}}

\date{\small}
\maketitle \vspace{-1cm}

\begin{quote}
\emph{
\begin{itemize}
\item[$^1$] Department of Pure Mathematics and Mathematical Statistics, Centre for
Mathematical Sciences, University of Cambridge, Cambridge, CB3 0WB, United Kingdom
\item[$^2$] Facultad de Matem\'aticas, Pontificia Universidad Cat\'olica de Chile,\\
Av. Vicu\~na Mackenna 4860, Santiago, Chile
\item[] \emph{E-mails:} sr510@cam.ac.uk, rtiedra@mat.puc.cl
\end{itemize}
}
\end{quote}

%--------------------------------------------------------------------------------------
% Abstract
%--------------------------------------------------------------------------------------

\begin{abstract}
We prove that the existence of time delay defined in terms of sojourn times, as well
as its identity with Eisenbud-Wigner time delay, is a common feature of two Hilbert
space quantum scattering theory. All statements are model-independent.
\end{abstract}

\textbf{2000 Mathematics Subject Classification:} 81U35, 47A40, 46N50.

%\tableofcontents

%--------------------------------------------------------------------------------------
\section{Introduction}\label{Intro}
\setcounter{equation}{0}
%--------------------------------------------------------------------------------------

In quantum scattering theory, there are only few results that are completely
model-independent. The simplest one is certainly that the strong limit
$\slim_{t\to\pm\infty}K\e^{-itH}P_{\rm ac}(H)$ vanishes whenever $H$ is a self-adjoint
operator in a Hilbert space $\H$, $P_{\rm ac}(H)$ the projection onto the subspace of
absolute continuity of $H$ and $K$ a compact operator in $\H$. Another famous result
of this type is RAGE Theorem which establishes propagation estimates for the elements
in the continuous subspace of $\H$. At the same level of abstraction, one could also
mention the role of $H$-smooth operators $B$ which lead to estimates of the form
$\int_\R\d t\,\|B\e^{-itH}\varphi\|^2<\infty$ for $\varphi\in\H$.

Our aim in this paper is to add a new general result to this list. Originally, this
result was presented as the existence of global time delay defined in terms of
sojourn times and its identity with Eisenbud-Wigner time delay \cite{Eis48,Wig55}.
This identity was proved in different settings by various authors (see
\cite{AC87,AJ07,AS06,BGG83,dCN02,GT07,GS80,JSM72,Jen81,Mar76,Mar81,MSA92,Rob94,RW89,
Tie06,Tie09,Wan88} and references therein), but a general and abstract statement has
never been proposed. Furthermore, it had not been realised until very recently that
its proof mainly relies on a general formula relating localisation operators to time
operators \cite{RT10}. Using this formula, we shall prove here that the existence and
the identity of the two time delays is in fact a common feature of quantum scattering
theory. On the way we shall need to consider a symmetrization procedure
\cite{AJ07,BO79,GT07,Mar75,Mar81,Smi60,Tie06,Tie08,Tie09} which broadly extends the
applicability of the theory but which also has the drawback of reducing the physical
interpretation of the result.

Quantum scattering theory is mainly a theory of comparison: One fundamental question
is whether, given a self-adjoint operator $H$ in a Hilbert space $\H$, one can find
a triple $(\H_0,H_0,J)$, with $H_0$ a self-adjoint operator in an auxiliary Hilbert
space $\H_0$ and $J$ a bounded operator from $\H_0$ to $\H$, such that the following
strong limits exist
\begin{equation*}
W_\pm:=\slim_{t\to\pm\infty}\e^{itH}J\e^{-itH_0}P_{\rm ac}(H_0)~?
\end{equation*}
Assuming that the operator $H_0$ is simpler than $H$, the study of the wave operators
$W_\pm$ lead to valuable information on the spectral decomposition of $H$.
Furthermore, if the ranges of both operators $W_\pm$ are equal to $P_{\rm ac}(H)\H$,
then the study of the scattering operator $S:=W_+^*W_-$ leads to further results on
the scattering process. We recall that since $S$ commutes with $H_0$, $S$ decomposes
into a family $\{S(\lambda)\}_{\lambda \in \sigma(H_0)}$ in the spectral
representation $\int_{\sigma(H_0)}^\oplus\d\lambda\,\H(\lambda)$ of $H_0$, with
$S(\lambda)$ a unitary operator in $\H(\lambda)$ for almost every $\lambda$ in the
spectrum $\sigma(H_0)$ of $H_0$.

An important additional ingredient when dealing with time delay is a family of
position-type operators which permits to define sojourn times, namely, a family of
mutually commuting self-adjoint operators $\Phi\equiv(\Phi_1,\ldots,\Phi_d)$ in
$\H_0$ satisfying two appropriate commutation assumptions with respect to $H_0$.
Roughly speaking, the first one requires that for some $z\in\C\setminus\R$ the map
$$
\R^d\ni x\mapsto\e^{-ix\cdot\Phi}(H_0-z)^{-1}\e^{ix\cdot\Phi}\in\B(\H_0)
$$
is three times strongly differentiable. The second one requires that all the
operators $\e^{-ix\cdot\Phi}H_0\e^{ix\cdot\Phi}$, $x\in\R^d$, mutually commute. Let
also $f$ be any non-negative Schwartz function on $\R^d$ with $f=1$ in a
neighbourhood of $0$ and $f(-x)= f(x)$ for each $x\in\R^d$. Then, to define the time
delay in terms of sojourn times one has to consider for any $r>0$ the expectation
values of the localisation operator $f(\Phi/r)$ on the freely evolving state
$\e^{-itH_0}\varphi$ as well as on the corresponding fully evolving state
$\e^{-itH}W_-\varphi$. However one immediately faces the problem that the evolution
group $\{\e^{-itH}\}_{t\in\R}$ acts in $\H$ whereas $f(\Phi/r)$ is an operator in
$\H_0$. As explained in Section \ref{SecSymDelay}, a general solution for this
problem consists in introducing a family $L(t)$ of (identification) operators from
$\H$ to $\H_0$ which satisfies some natural requirements (note that in many examples,
one can simply take $L(t)=J^*$ for all $t\in\R$). The sojourn time for the evolution
group $\{\e^{-itH}\}_{t\in\R}$ is then obtained by considering the expectation value
of $f(\Phi/r)$ on the state $L(t)\e^{-itH}W_-\varphi$. An additional sojourn time
naturally appears in this general two Hilbert space setting: the time spent by the
scattering state $\e^{-itH}W_-\varphi$ inside the time-dependent subset
$\big(1-L(t)^*L(t)\big)\H$ of $\H$. Apparently, this sojourn time has never been
discussed before in the literature. Finally, the total time delay is defined for
fixed $r$ as the integral over the time $t$ of the expectations values involving the
fully evolving state $L(t)\e^{-itH}W_-\varphi$ minus the symmetrized sum of the
expectations values involving the freely evolving state $\e^{-itH_0}\varphi$ (see
Equation \eqref{symsym} for a precise definition). Our main result, properly stated
in Theorem \ref{sym_case}, is the existence of the limit as $r\to\infty$ of the total
time delay and its identity with the Eisenbud-Wigner time delay (see \eqref{eqintro1}
below) which we now define in this abstract setting.

Under the mentioned assumptions on $\Phi$ and $H_0$ it is shown in \cite{RT10} how a
time operator for $H_0$ can be defined: With the Schwartz function $f$ introduced
above, one  defines a new function $R_f\in C^\infty\big(\R^d\setminus\{0\}\big)$ and
express the time operator in the (oversimplified) form
$$
T_f:=-\12\big(\Phi\cdot R_f'(H_0')+R_f'(H_0')\cdot\Phi\big),
$$
with $R_f':=\nabla R_f$ and $H_0':=\big(i[H_0,\Phi_1],\ldots,i[H_0,\Phi_d]\big)$ (see
Section \ref{SecIntegral} for details). In suitable situations and in an appropriate
sense, the operator $T_f$ acts as $i\frac\d{\d\lambda}$ in the spectral
representation of $H_0$ (for instance, when $H_0=-\Delta$ in $\ltwo(\R^d)$, this is
verified with $\Phi$ the usual family of position operators, see \cite[Sec.~7]{RT10}
for details and other examples). Accordingly, it is natural to define in this
abstract framework the Eisenbud-Wigner time delay as the expectation value
\begin{equation}\label{eqintro1}
-\big\langle\varphi,S^*[T_f,S]\varphi\big\rangle
\end{equation}
for suitable $\varphi\in\H_0$.

The interest of the equality between both definitions of time delay  is threefold. It
generalises and unifies various results on time delay scattered in the literature. It
provides a precise recipe for future investigations on the subject (for instance, for
new models in two Hilbert space scattering). And finally, it establishes a relation
between the two formulations of scattering theory: Eisenbud-Wigner time delay is a
product of the stationary formulation while expressions involving sojourn times are
defined using the time dependent formulation. An equality relating these two
formulations is always welcome.

In the last section (Section \ref{Sec_Usual}), we present a sufficient condition for
the equality of the symmetrized time delay with the original (unsymmetrized) time
delay. The physical interpretation of the latter was, a couple of decades ago, the
motivation for the introduction of these concepts.

As a final remark, let us add a comment about the applicability of our abstract
result. As already mentioned, most of the existing proofs, if not all, of the
existence and the identity of both time delays can be recast in our framework.
Furthermore, we are currently working on various new classes of scattering systems
for which our approach leads to new results. Among other, we mention the case of
scattering theory on manifolds which has recently attracted a lot of attention. Our
framework is also general enough for a rigorous approach of time delay in the
$N$-body problem (see \cite{BO79,Mar81,OB75,Smi60} for earlier attempts in this
direction). However, the verification of our abstract conditions for any non trivial
model always require some careful analysis, in particular for the mapping properties
of the scattering operator. As a consequence, we prefer to refer to
\cite{AJ07,GT07,Tie06,Tie08,Tie09} for various incarnations of our approach and to
present in this paper only the abstract framework for the time delay.

%--------------------------------------------------------------------------------------
\section{Operators $\boldsymbol{H_0}$ and $\boldsymbol\Phi$}\label{SecWave}
\setcounter{equation}{0}
%--------------------------------------------------------------------------------------

In this section, we recall the framework of \cite{RT10} on a self-adjoint operator
$H_0$ in a Hilbert space $\H_0$ and its relation with an abstract family
$\Phi\equiv(\Phi_1,\ldots,\Phi_d)$ of mutually commuting self-adjoint operators in
$\H_0$  (we use the term ``commute'' for operators commuting in the sense of
\cite[Sec.~VIII.5]{RSI}). In comparison with the notations of \cite{RT10}, we add an
index $0$ to all the quantities like the operators, the spaces, \etc

In order to express the regularity of $H_0$ with respect to $\Phi$, we recall from \cite{ABG} that a
self-adjoint operator $T$  with domain $\dom(T)\subset\H_0$ is said to be of class
$C^1(\Phi)$ if there exists $\omega\in\C\setminus\sigma(T)$ such that the map
\begin{equation}\label{Cm}
\R^d\ni x\mapsto\e^{-ix\cdot\Phi}(T-\omega)^{-1}\e^{ix\cdot\Phi}\in\B(\H_0)
\end{equation}
is strongly of class $C^1$ in $\H_0$. In such a case and for each
$j\in\{1,\ldots,d\}$, the set $\dom(T)\cap\dom(\Phi_j)$ is a core for $T$ and the
quadratic form
$
\dom(T)\cap\dom(\Phi_j)\ni\v\mapsto
\langle T\v,\Phi_j\v\rangle-\langle\Phi_j\v,T\v\rangle
$
is continuous in the topology of $\dom(T)$. This form extends then uniquely to a
continuous quadratic form $[T,\Phi_j]$ on $\dom(T)$, which can be identified with a
continuous operator from $\dom(T)$ to its dual $\dom(T)^*$. Finally, the following
equality holds:
$$
\big[\Phi_j,(T-\omega)^{-1}\big]=(T-\omega)^{-1}[T,\Phi_j](T-\omega)^{-1}.
$$
In the sequel, we shall say that $i[T,\Phi_j]$ is essentially self-adjoint on
$\dom(T)$ if $[T,\Phi_j]\dom(T)\subset\H_0$ and if $i[T,\Phi_j]$ is essentially
self-adjoint on $\dom(T)$ in the usual sense.

Our first main assumption concerns the regularity of $H_0$ with respect to $\Phi$.

\begin{Assumption}\label{chirimoya}
The operator $H_0$ is of class $C^1(\Phi)$, and for each $j\in\{1,\ldots,d\}$,
$i[H_0,\Phi_j]$ is essentially self-adjoint on $\dom(H_0)$, with its self-adjoint
extension denoted by $\partial_jH_0$. The operator $\partial_jH_0$ is of class
$C^1(\Phi)$, and for each $k\in\{1,\ldots,d\}$, $i[\partial_jH_0,\Phi_k]$ is
essentially self-adjoint on $\dom(\partial_jH_0)$, with its self-adjoint extension
denoted by $\partial_{jk}H_0$. The operator $\partial_{jk}H_0$ is of class
$C^1(\Phi)$, and for each $\ell\in\{1,\ldots,d\}$, $i[\partial_{jk}H_0,\Phi_\ell]$ is
essentially self-adjoint on $\dom(\partial_{jk}H_0)$, with its self-adjoint extension
denoted by $\partial_{jk\ell}H_0$.
\end{Assumption}

As shown in \cite[Sec.~2]{RT10}, this assumption implies the invariance of
$\dom(H_0)$ under the action of the unitary group $\{\e^{ix\cdot\Phi}\}_{x\in\R^d}$.
As a consequence, we obtain that each self-adjoint operator
\begin{equation}\label{H(x)}
H_0(x):=\e^{-ix\cdot\Phi}H_0\e^{ix\cdot\Phi}
\end{equation}
has domain $\dom[H_0(x)]=\dom(H_0)$.
Similarly, the domains $\dom(\partial_jH_0)$ and $\dom(\partial_{jk}H_0)$ are left
invariant by the action of the unitary group $\{\e^{ix\cdot \Phi}\}_{x\in\R^d}$, and
the operators $(\partial_jH_0)(x):=\e^{-ix\cdot\Phi}(\partial_jH_0)\e^{ix\cdot\Phi}$
and $(\partial_{jk}H_0)(x):=\e^{-ix\cdot\Phi}(\partial_{jk}H_0)\e^{ix\cdot\Phi}$ are
self-adjoint operators with domains $\dom(\partial_jH_0)$ and $\dom(\partial_{jk}H_0)$
respectively.

Our second main assumption concerns the family of operators $H_0(x)$.

\begin{Assumption}\label{commute}
The operators $H_0(x)$, $x\in\R^d$, mutually commute.
\end{Assumption}

This assumption is equivalent to the commutativity of each $H_0(x)$ with $H_0$. As
shown in \cite[Lemma~2.4]{RT10}, Assumptions \ref{chirimoya} and \ref{commute} imply
that the operators $H_0(x),(\partial_jH_0)(y)$ and $(\partial_{k\ell}H_0)(z)$
mutually commute for each $j,k,\ell\in\{1,\ldots,d\}$ and each $x,y,z\in\R^d$. For
simplicity, we write $H_0'$ for the $d$-tuple $(\partial_1H_0,\ldots,\partial_dH_0)$,
and define for each measurable function $g:\R^d\to\C$ the operator $g(H_0')$ by using
the $d$-variables functional calculus. Similarly, we consider the family of operators $\{\partial_{jk}H_0\}$ as the components of a $d$-dimensional matrix which we denote
by $H_0''$. The symbol $E^{H_0}(\;\!\cdot\;\!)$ denotes the spectral measure of $H_0$,
and we use the notation $E^{H_0}(\lambda;\delta)$ for
$E^{H_0}\big((\lambda-\delta,\lambda+\delta)\big)$.

We now recall the definition of the critical values of $H_0$ and state some basic
properties which have been established in \cite[Lemma~2.6]{RT10}.

\begin{Definition}\label{surkappa}
A number $\lambda\in\R$ is called a critical value of $H_0$ if
\begin{equation}\label{condition}
\lim_{\varepsilon\searrow0}\big\|\big(H_0'^2+\varepsilon\big)^{-1}
E^{H_0}(\lambda;\delta)\big\|=+\infty
\end{equation}
for each $\delta>0$. We denote by $\kappa(H_0)$ the set of critical values of $H_0$.
\end{Definition}

\begin{Lemma}\label{Heigen}
Let $H_0$ satisfy Assumptions \ref{chirimoya} and \ref{commute}. Then the set
$\kappa(H_0)$ possesses the following properties:
\begin{enumerate}
\item[(a)] $\kappa(H_0)$ is closed.
\item[(b)] $\kappa(H_0)$ contains the set of eigenvalues of $H_0$.
\item[(c)] The limit\,
$
\lim_{\varepsilon\searrow0}\big\|\big(H_0'^2+\varepsilon\big)^{-1}E^{H_0}(I)\big\|
$
is finite for each compact set $I \subset \R\setminus \kappa(H_0)$.
\item[(d)] For each compact set $I\subset\R\setminus\kappa(H_0)$, there exists a
compact set $U\subset(0,\infty)$ such that $E^{H_0}(I)=E^{|H_0'|}(U)E^{H_0}(I)$.
\end{enumerate}
\end{Lemma}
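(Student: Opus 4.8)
The plan is to reduce all four assertions to elementary statements about the joint spectrum of the pair of commuting self-adjoint operators $(H_0,H_0'^2)$. By the consequence of Assumptions \ref{chirimoya} and \ref{commute} recalled just above, the operators $H_0$ and $\partial_jH_0$, hence $H_0'^2=\sum_j(\partial_jH_0)^2$, mutually commute, so they admit a joint spectral measure $F$ on $\R\times[0,\infty)$ with coordinates $(\mu,s)$, under which $H_0$ acts as $\mu$, $H_0'^2$ as $s$, and $|H_0'|$ as $\sqrt s$. Since $\zeta\mapsto(\zeta+\varepsilon)^{-1}$ is decreasing on $[0,\infty)$, the functional calculus gives, for every $\lambda\in\R$ and $\delta>0$,
\[
\big\|(H_0'^2+\varepsilon)^{-1}E^{H_0}(\lambda;\delta)\big\|=\frac{1}{\varepsilon+m(\lambda,\delta)},\qquad m(\lambda,\delta):=\inf\sigma\big(H_0'^2\restriction E^{H_0}(\lambda;\delta)\H_0\big).
\]
Letting $\varepsilon\searrow0$, the defining condition \eqref{condition} holds for all $\delta>0$ precisely when $m(\lambda,\delta)=0$ for every $\delta>0$, that is, when $0\in\sigma\big(H_0'^2\restriction E^{H_0}(\lambda;\delta)\H_0\big)$ for every $\delta>0$. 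This is in turn equivalent to $(\lambda,0)\in\supp F$, and I obtain the characterisation $\kappa(H_0)=\{\lambda\in\R:(\lambda,0)\in\supp F\}$, from which (a), (c) and (d) follow quickly.

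For (a), since $\supp F$ is closed and $\lambda\mapsto(\lambda,0)$ is continuous, $\kappa(H_0)$ is the preimage of the closed set $\supp F$ under a continuous map, hence closed. For (c) and (d), fix a compact $I\subset\R\setminus\kappa(H_0)$. Then $(\lambda,0)\notin\supp F$ for each $\lambda\in I$, so by closedness of $\supp F$ every such point has an open neighbourhood disjoint from $\supp F$; extracting a finite subcover by compactness of $I$ yields a single $c>0$ with $\big(I\times[0,c)\big)\cap\supp F=\emptyset$. Equivalently $H_0'^2\geq c$ on $E^{H_0}(I)\H_0$, whence $\big\|(H_0'^2+\varepsilon)^{-1}E^{H_0}(I)\big\|\leq(c+\varepsilon)^{-1}\leq c^{-1}$, proving (c). For (d) it remains to bound $|H_0'|$ from above on $E^{H_0}(I)\H_0$: from the $C^1(\Phi)$ structure together with $[H_0,\Phi_j]\dom(H_0)\subset\H_0$, the closed graph theorem shows each $\partial_jH_0$ to be $H_0$-bounded, so $\dom(H_0)\subseteq\dom(\partial_jH_0)$ and $|H_0'|\leq M$ on $E^{H_0}(I)\H_0\subset\dom(H_0)$ for some $M<\infty$. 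Combined with the lower bound $|H_0'|\geq\sqrt c$, this gives $\sigma\big(|H_0'|\restriction E^{H_0}(I)\H_0\big)\subseteq U:=[\sqrt c,M]\subset(0,\infty)$, which is exactly the content of (d).

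The remaining statement (b) is the delicate point, and I expect it to be the main obstacle, since it is a virial-type result: I must show that every eigenvector of $H_0$, say $H_0\varphi=\lambda\varphi$, satisfies $(\partial_jH_0)\varphi=0$ for all $j$, which forces $F(\{(\lambda,0)\})\varphi=\varphi\neq0$ and hence $(\lambda,0)\in\supp F$, i.e. $\lambda\in\kappa(H_0)$. The difficulty is that $\varphi$ need not lie in $\dom(\Phi_j)$, so the naive computation $\langle\varphi,[H_0,\Phi_j]\varphi\rangle=0$ is not directly available; this is circumvented by the virial theorem in the $C^1(\Phi)$ framework of \cite{ABG}, applied through the resolvent identity $[\Phi_j,(H_0-\omega)^{-1}]=-i(H_0-\omega)^{-1}(\partial_jH_0)(H_0-\omega)^{-1}$ and a regularisation of $\Phi_j$, to conclude $\langle\varphi,(\partial_jH_0)\varphi\rangle=0$ for every eigenvector at $\lambda$ (note $\varphi\in\dom(H_0)\subseteq\dom(\partial_jH_0)$ by the $H_0$-boundedness used in (d), so this expectation is well defined). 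Since $\partial_jH_0$ commutes with $H_0$, the eigenspace $E^{H_0}(\{\lambda\})\H_0$ reduces $\partial_jH_0$ and the restriction is a bounded self-adjoint operator whose quadratic form vanishes identically; by polarisation this restriction is $0$, giving $(\partial_jH_0)\varphi=0$ and completing (b).
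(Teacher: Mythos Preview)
Your argument is correct. Note, however, that the paper does not give its own proof of this lemma: it is simply quoted from \cite[Lemma~2.6]{RT10}, so there is nothing in the present paper to compare against directly.

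Your approach---passing to the joint spectral measure $F$ of the commuting pair $(H_0,H_0'^2)$ and identifying $\kappa(H_0)=\{\lambda:(\lambda,0)\in\supp F\}$---is clean and makes (a), (c), (d) essentially immediate. The only place requiring care is (b), and you handle it correctly: the virial theorem in the $C^1(\Phi_j)$ framework (\cite[Prop.~7.2.10]{ABG}) gives $\langle\varphi,(\partial_jH_0)\varphi\rangle=0$ for every eigenvector $\varphi$; the commutation of $\partial_jH_0$ with $H_0$ (a consequence of Assumptions~\ref{chirimoya} and~\ref{commute}) lets you restrict $\partial_jH_0$ to the eigenspace as a bounded self-adjoint operator; and polarisation then forces this restriction to vanish, so $H_0'^2\varphi=0$ and $(\lambda,0)\in\supp F$. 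The $H_0$-boundedness of $\partial_jH_0$ that you invoke for (d) (and implicitly for the boundedness of the restriction in (b)) is indeed available from Assumption~\ref{chirimoya} via the closed graph theorem, exactly as you say.
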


In \cite[Sec.~3]{RT10} a Mourre estimate is also obtained under Assumptions
\ref{chirimoya} and \ref{commute}. It implies spectral results for $H_0$ and the
existence of locally $H_0$-smooth operators. We use the notation
$\langle x\rangle:=(1+x^2)^{1/2}$ for any $x\in\R^d$.

\begin{Theorem}\label{not_bad}
Let $H_0$ satisfy Assumptions \ref{chirimoya} and \ref{commute}. Then,
\begin{enumerate}
\item[(a)] the spectrum of $H_0$ in $\sigma(H_0)\setminus\kappa(H_0)$ is purely
absolutely continuous,
\item[(b)] each operator $B\in\B\big(\dom(\langle\Phi\rangle^{-s}),\H_0\big)$, with
$s>1/2$, is locally $H_0$-smooth on $\R\setminus\kappa(H_0)$.
\end{enumerate}
\end{Theorem}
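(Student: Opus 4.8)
The plan is to derive both statements from a Mourre estimate for $H_0$ with conjugate operator built from $\Phi$, using the self-adjoint derivatives $\partial_jH_0$ supplied by Assumption~\ref{chirimoya}. I will take as conjugate operator a symmetrized version of $\Phi\cdot H_0'$, morally $A:=\frac12(\Phi\cdot H_0'+H_0'\cdot\Phi)$, whose formal commutator with $H_0$ is $i[H_0,A]=H_0'^2=\sum_j(\partial_jH_0)^2$. Since Assumptions~\ref{chirimoya} and~\ref{commute} guarantee that $H_0$, the $\partial_jH_0$, and the $\partial_{jk}H_0$ all mutually commute and that the relevant commutators are essentially self-adjoint, this positive commutator is a function of the commuting family $H_0'$, so it is diagonal in the joint spectral representation. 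The first step is therefore to make this computation rigorous: verify that $H_0$ is of class $C^2(A)$ (or at least $C^{1,1}(A)$, which is what the conjugate operator theory of \cite{ABG} needs for the limiting absorption principle), using the iterated regularity in Assumption~\ref{chirimoya} to control the second commutator $[[H_0,A],A]$.

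Next I would localise in energy. Fix a compact $I\subset\R\setminus\kappa(H_0)$. By Definition~\ref{surkappa}, $\lambda\notin\kappa(H_0)$ means $\|(H_0'^2+\varepsilon)^{-1}E^{H_0}(\lambda;\delta)\|$ stays bounded as $\varepsilon\searrow0$ for some $\delta>0$; equivalently, by Lemma~\ref{Heigen}(c), this norm is uniformly bounded on $I$. This is precisely the statement that $H_0'^2$ is bounded below by a positive constant on the range of $E^{H_0}(I)$: there is $a>0$ with
\begin{equation*}
E^{H_0}(I)\,H_0'^2\,E^{H_0}(I)\geq a\,E^{H_0}(I).
\end{equation*}
Indeed Lemma~\ref{Heigen}(d) makes this explicit, since it provides a compact $U\subset(0,\infty)$ with $E^{H_0}(I)=E^{|H_0'|}(U)E^{H_0}(I)$, so $|H_0'|$ is bounded away from $0$ on this subspace. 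Combined with the commutator identity $i\,E^{H_0}(I)[H_0,A]E^{H_0}(I)=E^{H_0}(I)H_0'^2E^{H_0}(I)$, this yields a strict Mourre estimate with no compact remainder on any such $I$.

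From a strict Mourre estimate plus the $C^{1,1}(A)$ regularity, the standard machinery of \cite{ABG} delivers both conclusions at once. Part~(a): the absence of singular continuous spectrum and of eigenvalues in $I$ follows from the virial theorem and the strictness of the estimate (no eigenvalues because the remainder is zero), and since $I$ is an arbitrary compact subset of $\sigma(H_0)\setminus\kappa(H_0)$, the spectrum there is purely absolutely continuous. Part~(b): the limiting absorption principle holds on $I$, giving uniform bounds on $\langle A\rangle^{-s}(H_0-\lambda\mp i0)^{-1}\langle A\rangle^{-s}$ for $s>1/2$; converting the weights from $\langle A\rangle$ to $\langle\Phi\rangle$ and invoking the standard equivalence between the limiting absorption principle and local $H_0$-smoothness (Kato's theorem) gives local $H_0$-smoothness on $\R\setminus\kappa(H_0)$ of every $B\in\B(\dom(\langle\Phi\rangle^{-s}),\H_0)$.

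The main obstacle I anticipate is purely operator-theoretic bookkeeping rather than a conceptual difficulty: making the conjugate operator $A$ genuinely self-adjoint and verifying $C^{1,1}(A)$ regularity when $H_0'$ is only a tuple of commuting self-adjoint operators obtained as closures of commutators, and justifying that the weight replacement $\langle A\rangle\leftrightarrow\langle\Phi\rangle$ preserves the smoothness class. Since this paper cites \cite{RT10} for exactly these spectral results, I expect the author's proof to be a short reduction to \cite[Sec.~3]{RT10}, where the Mourre estimate and the attendant regularity were already established; the content here is simply to read off parts~(a) and~(b) from that estimate together with Lemma~\ref{Heigen}.
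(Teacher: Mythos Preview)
Your proposal is correct and you have accurately anticipated the situation: the paper gives no proof at all for this theorem, merely stating it after the sentence ``In \cite[Sec.~3]{RT10} a Mourre estimate is also obtained under Assumptions \ref{chirimoya} and \ref{commute}. It implies spectral results for $H_0$ and the existence of locally $H_0$-smooth operators.'' Your Mourre-theoretic sketch (conjugate operator $A=\tfrac12(\Phi\cdot H_0'+H_0'\cdot\Phi)$, strict estimate $E^{H_0}(I)\,i[H_0,A]\,E^{H_0}(I)=E^{H_0}(I)H_0'^2E^{H_0}(I)\ge aE^{H_0}(I)$ via Lemma~\ref{Heigen}(d), then the \cite{ABG} machinery) is exactly the content of \cite[Sec.~3]{RT10} that the paper is invoking.
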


%--------------------------------------------------------------------------------------
\section{Integral formula for $\boldsymbol{H_0}$}\label{SecIntegral}
\setcounter{equation}{0}
%--------------------------------------------------------------------------------------

We recall in this section the main result of \cite{RT10}, which is expressed in terms
of a function $R_f$ appearing naturally when dealing with quantum scattering theory.
The function $R_f$ is a renormalised average of a function $f$ of
localisation around the origin $0\in\R^d$. These functions were already used, in one
form or another, in \cite{GT07,RT10,Tie08,Tie09}. In these references, part of the
results were obtained under the assumption that $f$ belongs to the Schwartz space
$\S(\R^d)$. So, for simplicity, we shall assume from the very beginning that
$f\in\S(\R^d)$ and also that $f$ is even, \ie $f(x)=f(-x)$ for all $x\in\R^d$. Let us
however mention that some of the following results easily extend to the larger class
of functions introduced in \cite[Sec.~4]{RT10}.

\begin{Assumption}\label{assumption_f}
The function $f\in\S(\R^d)$ is non-negative, even and equal to $1$ on a neighbourhood
of $~0\in\R^d$.
\end{Assumption}

It is clear that $\slim_{r\to\infty}f(\Phi/r)=1$ if $f$ satisfies Assumption
\ref{assumption_f}. Furthermore, it also follows from this assumption that the
function $R_f:\R^d\setminus\{0\}\to\R$ given by
$$
R_f(x):=\int_0^\infty\frac{\d\mu}\mu\big(f(\mu x)-\chi_{[0,1]}(\mu)\big)
$$
is well-defined. The following properties of $R_f$ are proved in
\cite[Sec. 2]{Tie09}: The function $R_f$ belongs to $C^\infty(\R^d\setminus\{0\})$
and satisfies
\begin{equation*}
R_f'(x)=\int_0^\infty\d\mu\;\!f'(\mu x)
\end{equation*}
as well as the homogeneity properties $x\cdot R_f'(x)=-1$ and
$t^{|\alpha|}(\partial^\alpha R_f)(tx)=(\partial^\alpha R_f)(x)$, where
$\alpha\in\N^d$ is a multi-index and $t>0$. Furthermore, if $f$ is radial, then
$R_f'(x)=-x^{-2}x$. We shall also need the function $F_f:\R^d\setminus\{0\}\to\R$
defined by
\begin{equation}\label{def_F}
F_f(x):=\int_\R\d\mu\;\!f(\mu x).
\end{equation}
The function $F_f$ satisfies several properties as $R_f$ such as $F_f(x)=tF_f(tx)$
for each $t>0$ and each $x\in\R^d\setminus\{0\}$.

Now, we know from Lemma \ref{Heigen}.(a) that the set $\kappa(H_0)$ is closed. So we
can define for each $t\ge0$ the set
$$
\D_t:=\big\{\varphi\in\dom(\langle\Phi\rangle^t)\mid\varphi=\eta(H_0)\varphi
\textrm{ for some }\eta\in C^\infty_{\rm c}\big(\R\setminus\kappa(H_0)\big)\big\}.
$$
The set $\D_t$ is included in the subspace $\H_{\rm ac}(H_0)$ of absolute
continuity of $H_0$, due to Theorem \ref{not_bad}.(a), and
$\D_{t_1}\subset\D_{t_2}$ if $t_1\ge t_2$. We refer the reader to
\cite[Sec.~6]{RT10} for an account on density properties of the sets $\D_t$.

In the sequel, we sometimes write $C^{-1}$ for an operator $C$ a priori not
invertible. In such a case, the operator $C^{-1}$ will always be acting on a set
where it is well-defined. Next statement follows from \cite[Prop.~5.2]{RT10} and
\cite[Rem.~5.4]{RT10}.

\begin{Proposition}\label{lemma_T_f}
Let $H_0$ satisfy Assumptions \ref{chirimoya} and \ref{commute}, and let $f$ satisfy
Assumption \ref{assumption_f}. Then the map
$$
t_f:\D_1\to\C,\quad\v\mapsto
t_f(\v):=-\12\sum_j\big\{\big\langle\Phi_j\v,(\partial_jR_f)(H_0')\v\big\rangle
+\big\langle\big(\partial_jR_f \big)(H_0')\v,\Phi_j\v\big\rangle\big\},
$$
is well-defined. Moreover, the linear operator $T_f:\D_1\to\H_0$ defined by
\begin{equation}\label{nemenveutpas}
\textstyle
T_f\v:=-\12\big(\Phi\cdot R_f'(H_0')+R_f'\big(\frac{H_0'}{|H_0'|})\cdot\Phi\;\!|H_0'|^{-1}
+iR_f'\big(\frac{H_0'}{|H_0'|}\big)\cdot\big(H_0''^{\sf T}H_0'\big)|H_0'|^{-3}\big)\v
\end{equation}
satisfies $t_f(\v)=\langle\v,T_f\v\rangle$ for each $\v\in\D_1$. In particular,
$T_f$ is a symmetric operator if $\D_1$ is dense in $\H_0$.
\end{Proposition}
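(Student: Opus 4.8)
The plan is to establish the three assertions in sequence: well-definedness of $t_f$ on $\D_1$, the operator identity $t_f(\v)=\langle\v,T_f\v\rangle$, and finally symmetry of $T_f$. The first task is the crux and rests on the spectral structure established in Lemma \ref{Heigen} and Theorem \ref{not_bad}. The potential danger in the definition of $t_f$ is the factor $(\partial_jR_f)(H_0')$, since $R_f$ is only smooth away from the origin and its derivatives are homogeneous of degree $-1$, hence singular where $H_0'$ vanishes. First I would exploit the cutoff built into $\D_1$: for $\v\in\D_1$ we have $\v=\eta(H_0)\v$ with $\eta\in C^\infty_{\rm c}(\R\setminus\kappa(H_0))$, so $\v=E^{H_0}(I)\v$ for a compact $I\subset\R\setminus\kappa(H_0)$. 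By Lemma \ref{Heigen}.(d) there is a compact $U\subset(0,\infty)$ with $E^{H_0}(I)=E^{|H_0'|}(U)E^{H_0}(I)$, so on the range of such $\v$ the operator $|H_0'|$ is bounded away from $0$. Consequently $(\partial_jR_f)(H_0')\v$ is a well-defined vector in $\H_0$, being obtained by applying a bounded Borel function of the commuting tuple $H_0'$ to $\v$. Combined with $\v\in\dom(\langle\Phi\rangle)\subset\dom(\Phi_j)$, each inner product $\langle\Phi_j\v,(\partial_jR_f)(H_0')\v\rangle$ makes sense and the finite sum defining $t_f(\v)$ is finite.

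Next I would prove the operator identity. The natural route is to compute $\langle\v,T_f\v\rangle$ directly from the explicit formula \eqref{nemenveutpas} and match it to the symmetrized expression $t_f(\v)$. Writing $T_f\v=-\tfrac12(A+B+C)\v$ with $A=\Phi\cdot R_f'(H_0')$, $B=R_f'(H_0'/|H_0'|)\cdot\Phi\,|H_0'|^{-1}$ and $C=iR_f'(H_0'/|H_0'|)\cdot(H_0''^{\sf T}H_0')|H_0'|^{-3}$, the key is to recognize that $B$ and $C$ together reconstitute the second term $\tfrac12\sum_j\langle(\partial_jR_f)(H_0')\v,\Phi_j\v\rangle$ of $t_f$. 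This requires moving $\Phi_j$ across $(\partial_jR_f)(H_0')$, which generates commutators. The homogeneity relation $R_f'(x/|x|)|x|^{-1}=R_f'(x)$ (from $t^{|\alpha|}(\partial^\alpha R_f)(tx)=(\partial^\alpha R_f)(x)$ with $t=|x|^{-1}$) identifies the principal part of $B$ with $R_f'(H_0')$, while the commutator of $\Phi_j$ with the function $(\partial_jR_f)(H_0')$ of $H_0'$ produces precisely the correction term $C$, since $i[\Phi_k,\partial_jH_0]=\partial_{jk}H_0$ are the matrix entries of $H_0''$. The invariance and commutativity facts cited from \cite[Lemma~2.4]{RT10} guarantee that all these functional-calculus manipulations are legitimate on $\D_1$.

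I expect the commutator computation in the second step to be the main obstacle, because it demands careful bookkeeping of the chain rule for a function of the operator-tuple $H_0'$ together with the nonabelian ordering of $\Phi$ and $H_0'$; the singular weights $|H_0'|^{-1}$ and $|H_0'|^{-3}$ must be controlled throughout, again using the spectral localization into $U\subset(0,\infty)$ to keep every operator bounded on $\D_1$. Since the proposition explicitly attributes the result to \cite[Prop.~5.2]{RT10} and \cite[Rem.~5.4]{RT10}, I would at this point invoke those references for the detailed verification of the identity rather than reproducing the full computation, and simply indicate how the remark accounts for the passage from the form $t_f$ to the operator $T_f$. Finally, symmetry of $T_f$ is immediate: the identity $t_f(\v)=\langle\v,T_f\v\rangle$ shows that the quadratic form $\v\mapsto\langle\v,T_f\v\rangle$ is real-valued on $\D_1$, because $t_f(\v)$ is manifestly real as a sum of a quantity and its complex conjugate. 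If $\D_1$ is dense in $\H_0$, a densely defined operator whose associated quadratic form is real is symmetric, which closes the argument.
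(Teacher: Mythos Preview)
Your proposal is correct and aligns with the paper's treatment: the paper gives no self-contained proof but simply attributes the statement to \cite[Prop.~5.2]{RT10} and \cite[Rem.~5.4]{RT10}, and you likewise defer the commutator computation to those references while supplying the heuristic scaffolding (spectral localisation via Lemma~\ref{Heigen}(d) to tame the $|H_0'|^{-1}$ singularity, the homogeneity $R_f'(x/|x|)|x|^{-1}=R_f'(x)$, and reality of $t_f$ for symmetry). There is nothing to correct.
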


\begin{Remark}\label{RemarkOnTf}
Formula \eqref{nemenveutpas} is a priori rather complicated and one could be tempted
to replace it by the simpler formula
$-\12\big(\Phi\cdot R_f'(H_0')+R_f'(H_0')\cdot\Phi\big)$. Unfortunately, a precise
meaning of this expression is not available in general, and its full derivation can
only be justified in concrete examples. However, when $f$ is radial, then
$(\partial_jR_f)(x)=-x^{-2}x_j$, and $T_f$ is equal on $\D_1$ to
\begin{equation}
\textstyle
T:=\12 \big(\Phi\cdot\frac{H_0'}{(H_0')^2}+\frac{H_0'}{|H_0'|}\cdot\Phi\;\!|H_0'|^{-1}
+\frac{iH_0'}{(H_0')^4}\cdot\big(H_0''^{\sf T}H_0'\big)\big).\label{T}
\end{equation}
\end{Remark}

Next theorem is the main result of \cite{RT10}; it relates the evolution of the
localisation operators $f(\Phi/r)$ to the operator $T_f$.

\begin{Theorem}[Theorem 5.5 of \cite{RT10}]\label{for_Schwartz}
Let $H_0$ satisfy Assumptions \ref{chirimoya} and \ref{commute}, and let $f$ satisfy
Assumption \ref{assumption_f}. Then we have for each $\v\in\D_2$
\begin{equation}\label{T_f}
\lim_{r\to\infty}\12\int_0^\infty\d t\,\big\langle\v,
\big(\e^{-itH_0}f(\Phi/r)\e^{itH_0}-\e^{itH_0}f(\Phi/r)\e^{-itH_0}\big)\v\big\rangle\\
=\langle\v,T_f\v\rangle.
\end{equation}
\end{Theorem}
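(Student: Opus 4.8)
The plan is to prove \eqref{T_f} by differentiating under the time integral, so that the integrand becomes a commutator $i[H_0, f(\Phi/r)]$ sandwiched between the evolutions, and then to evaluate the resulting expression in the limit $r\to\infty$ using the homogeneity of $R_f$. First I would observe that for $\v\in\D_2$ the vector $\v$ lies in $\dom(\langle\Phi\rangle^2)$ and is spectrally localised away from $\kappa(H_0)$, so by Lemma~\ref{Heigen}.(d) we may insert a spectral cutoff $E^{|H_0'|}(U)$ with $U\subset(0,\infty)$ compact; this renders all the inverse powers $|H_0'|^{-1}$, $|H_0'|^{-3}$ appearing in \eqref{nemenveutpas} bounded on the relevant subspace, and keeps the operator $T_f\v$ well-defined. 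Writing $\Psi_r(t):=\big\langle\v,\big(\e^{-itH_0}f(\Phi/r)\e^{itH_0}-\e^{itH_0}f(\Phi/r)\e^{-itH_0}\big)\v\big\rangle$, the idea is that $\Psi_r(t)$ is real, vanishes at $t=0$ (the two terms coincide), and its derivative is
\begin{equation*}
\Psi_r'(t)=\big\langle\v,\e^{-itH_0}\,i[H_0,f(\Phi/r)]\,\e^{itH_0}\v\big\rangle
+\big\langle\v,\e^{itH_0}\,i[H_0,f(\Phi/r)]\,\e^{-itH_0}\v\big\rangle,
\end{equation*}
so that $\tfrac12\int_0^\infty\d t\,\Psi_r(t)$ can be handled once the large-$t$ decay and the commutator structure are controlled.

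The core computation is to expand the commutator $i[H_0,f(\Phi/r)]$. Using Assumption~\ref{chirimoya} one commutes $H_0$ through the function $f(\Phi/r)$ via the chain rule, producing a leading term proportional to $\tfrac1r\sum_j (\partial_jf)(\Phi/r)\,\partial_jH_0$ symmetrised, i.e.\ essentially $\tfrac1r\,(\nabla f)(\Phi/r)\cdot H_0'$, plus lower-order terms involving the second derivatives $\partial_{jk}H_0=H_0''$ which account for the curvature correction (the third term in \eqref{nemenveutpas}). The strategy is then to pass to a spectral representation of $H_0$ diagonalising the commuting family $H_0,H_0',H_0''$ simultaneously (legitimate by Assumption~\ref{commute} and the joint-commutativity consequence quoted after it), and to perform the change of variables that turns the $t$-integral together with the renormalising $\tfrac1r$ factor into the defining integral of $R_f$. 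Concretely, the oscillatory factor $\e^{\pm itH_0}$ acting against the localisation scale $r$ produces, after the substitution $\mu=t/r$ or a stationary-phase/scaling argument, exactly $\int_0^\infty\d\mu\,f'(\mu\,\cdot)=R_f'(\cdot)$ evaluated at the classical velocity $H_0'$; this is where the homogeneity identities $x\cdot R_f'(x)=-1$ and $t^{|\alpha|}(\partial^\alpha R_f)(tx)=(\partial^\alpha R_f)(x)$ are indispensable, since they let one replace $H_0'$ by its direction $H_0'/|H_0'|$ at the cost of explicit powers of $|H_0'|$, reproducing term by term the three summands of \eqref{nemenveutpas}.

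I would organise the limit argument so that the locally $H_0$-smooth estimates of Theorem~\ref{not_bad}.(b) supply the $t$-integrability: writing the leading contribution as an integral of $\|B\e^{itH_0}\v\|^2$-type quantities with $B\in\B(\dom(\langle\Phi\rangle^{-s}),\H_0)$, $s>1/2$, guarantees absolute convergence of the $t$-integral uniformly in $r$, so that dominated convergence applies as $r\to\infty$. The main obstacle, I expect, is making the interchange of the limits $r\to\infty$ and $t$-integration fully rigorous while simultaneously justifying the chain-rule expansion of $i[H_0,f(\Phi/r)]$ up to second order: the commutator manipulations only make sense as forms on $\D_2$, and one must show the error terms (third-order remainders controlled by $\partial_{jk\ell}H_0$ from Assumption~\ref{chirimoya}) vanish in the limit rather than merely stay bounded. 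This is precisely the technical heart of \cite{RT10}, and since the statement is quoted as Theorem~5.5 there, the cleanest route is to reduce to that reference after setting up the differentiation-and-scaling scheme above; the delicate quantitative decay estimates in $t$ and $r$ are what carry the weight of the proof.
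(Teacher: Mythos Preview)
The paper does not prove this theorem; it simply quotes it as Theorem~5.5 of \cite{RT10}. So there is no in-paper argument to compare your sketch against, and your closing remark---that the cleanest route is to defer to \cite{RT10}---is in fact exactly what the paper does.

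That said, your heuristic contains a structural gap worth flagging. You compute $\Psi_r'(t)$, but the quantity to be evaluated is $\tfrac12\int_0^\infty\Psi_r(t)\,\d t$, not an integral of the derivative; you never explain how knowing $\Psi_r'(t)$ helps with the undifferentiated time integral, and there is no integration-by-parts or fundamental-theorem step that converts one into the other. The two ingredients you list---the commutator expansion and the change of variables $\mu=t/r$---belong to the \emph{same} manoeuvre, not to successive ones: the scaling $\mu=t/r$, $\nu=1/r$ is applied directly to the original integrand via the Fourier representation $f(\Phi/r)=\int(\F f)(x)\,\e^{i\nu x\cdot\Phi}\,\underline{\d x}$ and the identity $\e^{-i\frac\mu\nu H_0}\e^{i\nu x\cdot\Phi}\e^{i\frac\mu\nu H_0}=\e^{i\nu x\cdot\Phi}\e^{-i\frac\mu\nu[H_0(\nu x)-H_0]}$; the commutators with $\Phi$ then appear when one Taylor-expands in $\nu$ about $\nu=0$, and $R_f'$ emerges from the remaining $\mu$-integral $\int_0^\infty f'(\mu\,\cdot)\,\d\mu$. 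This is exactly the scheme visible in the paper's own proofs of Lemma~\ref{batman} and Theorem~\ref{equal_sojourn}, which reuse the machinery of \cite{RT10}. Your sketch has the right raw materials (Fourier, scaling, homogeneity of $R_f$, second-order remainder controlled by $H_0''$ and Assumption~\ref{chirimoya}), but the order of operations---differentiate in $t$ first, then rescale---does not close as written.
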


In particular, when the localisation function $f$ is radial, the  operator $T_f$ in
the r.h.s. of \eqref{T_f} is equal to the operator $T$, which is independent of $f$.

%--------------------------------------------------------------------------------------
\section{Symmetrized time delay}\label{SecSymDelay}
\setcounter{equation}{0}
%--------------------------------------------------------------------------------------

In this section we prove the existence of symmetrized time delay for a scattering
system $(H_0,H,J)$ with free operator $H_0$, full operator $H$, and identification
operator $J$. The operator $H_0$ acts in the Hilbert space $\H_0$ and satisfies the
assumptions \ref{chirimoya} and \ref{commute} with respect to the family $\Phi$. The
operator $H$ is a self-adjoint operator in a Hilbert space $\H$ satisfying the
assumption \ref{wave} below. The operator $J:\H_0\to\H$ is a bounded operator used to
``identify" the Hilbert space $\H_0$ with a subset of $\H$.

The assumption on $H$ concerns the existence, the isometry and the completeness of
the generalised wave operators:

\begin{Assumption}\label{wave}
The generalised wave operators
$$
\displaystyle
W_\pm:=\slim_{t\to\pm\infty}\e^{itH}J\e^{-itH_0}P_{\rm ac}(H_0)
$$
exist, are partial isometries with initial subspaces $\H_0^\pm$ and final subspaces
$\H_{\rm ac}(H)$.
\end{Assumption}

Sufficient conditions on $JH_0-HJ$ ensuring the existence and the completeness of
$W_\pm$ are given in \cite[Chap.~5]{Yaf92}. The main consequence of Assumption
\ref{wave} is that the scattering operator
\begin{equation*}
S:= W_+^*W_-:\H_0^-\to\H_0^+
\end{equation*}
is a well-defined unitary operator commuting with $H_0$.

We now define the sojourn times for the quantum scattering system $(H_0,H,J)$,
starting with the sojourn time for the free evolution $\e^{-itH_0}$. So, let $r>0$
and let $f$ be a non-negative element of $\S(\R^d)$ equal to $1$ on a neighbourhood
$\Sigma$ of the origin $0\in\R^d$. For $\varphi\in\D_0$, we set
$$
T_r^0(\varphi)
:=\int_\R\d t\,\big\langle\e^{-itH_0}\varphi,f(\Phi/r)\e^{-itH_0}\varphi\big\rangle,
$$
where the integral has to be understood as an improper Riemann integral. The operator
$f(\Phi/r)$ is approximately the projection onto the subspace $E^\Phi(r\Sigma)\H_0$
of $\H_0$, with $r\Sigma:=\{x\in\R^d\mid x/r\in\Sigma\}$. Therefore, if
$\|\varphi\|=1$, then $T_r^0(\varphi)$ can be approximately interpreted as the time
spent by the evolving state $\e^{-itH_0}\varphi$ inside $E^\Phi(r\Sigma)\H_0$.
Furthermore, the expression $T_r^0(\varphi)$ is finite for each $\varphi\in\D_0$,
since we know from Lemma \ref{not_bad}.(b) that each operator
$B\in\B\big(\dom(\langle\Phi\rangle^{-s}),\H_0\big)$, with $s>\12$, is locally
$H_0$-smooth on $\R\setminus\kappa(H_0)$.

When defining the sojourn time for the full evolution $\e^{-itH}$, one faces the
problem that the localisation operator $f(\Phi/r)$ acts in $\H_0$ while the operator
$\e^{-itH}$ acts in $\H$. The obvious modification would be to consider the operator
$Jf(\Phi/r)J^*\in\B(\H)$, but the resulting framework could be not general enough
(see Remark \ref{RemJ} below). Sticking to the basic idea that the freely evolving
state $\e^{-itH_0}\varphi$ should approximate, as $t\to\pm\infty$, the corresponding
evolving state $\e^{-itH}W_\pm\varphi$, one should look for operators
$L(t):\H\to\H_0$, $t\in\R$, such that
\begin{equation}\label{defconv}
\lim_{t\to\pm\infty}\left\|L(t)\e^{-itH}W_\pm\varphi-\e^{-itH_0}\varphi\right\|=0.
\end{equation}
Since we consider vectors $\varphi\in\D_0$, the operators $L(t)$ can be unbounded as
long as $L(t)E^{H}(I)$ are bounded for any bounded subset $I\subset\R$. With such a
family of operators $L(t)$, it is natural to define the sojourn time for the full
evolution $\e^{-itH}$ by the expression
\begin{equation}\label{term1}
T_{r,1}(\varphi):=\int_\R\d t\,
\big\langle L(t)\e^{-itH}W_-\varphi,f(\Phi/r)L(t)\e^{-itH}W_-\varphi\big\rangle.
\end{equation}
Another sojourn time appearing naturally in this context is
\begin{equation}\label{term2}
T_2(\varphi):=\int_\R\d t\,\big\langle\e^{-itH}W_-\varphi,\big(1-L(t)^*L(t)\big)
\e^{-itH}W_-\varphi\big\rangle_\H.
\end{equation}
The finiteness of $T_{r,1}(\varphi)$ and $T_2(\varphi)$ is proved under an additional
assumption in Lemma \ref{lemma_free} below. The term $T_{r,1}(\varphi)$ can be
approximatively interpreted as the time spent by the scattering state
$\e^{-itH}W_-\varphi$, injected in $\H_0$ via $L(t)$, inside $E^\Phi(r\Sigma)\H_0$.
The term $T_2(\varphi)$ can be seen as the time spent by the scattering state
$\e^{-itH}W_-\varphi$ inside the time-dependent subset $\big(1-L(t)^*L(t)\big)\H$ of
$\H$. If $L(t)$ is considered as a time-dependent quasi-inverse for the
identification operator $J$ (see \cite[Sec.~2.3.2]{Yaf92} for the related
time-independent notion of quasi-inverse), then the subset $\big(1-L(t)^*L(t)\big)\H$
can be seen as an approximate complement of $J\H_0$ in $\H$ at time $t$. When
$\H_0=\H$, one usually sets $L(t)=J^*=1$, and the term $T_2(\varphi)$ vanishes. Within
this general framework, we say that
\begin{equation}\label{symsym}
\tau_r(\varphi):=T_r(\varphi)-\12\big\{T_r^0(\varphi)+T_r^0(S\varphi)\big\},
\end{equation}
with $T_r(\varphi):=T_{r,1}(\varphi)+T_2(\varphi)$, is the symmetrized time delay of
the scattering system $(H_0,H,J)$ with incoming state $\varphi$. This symmetrized
version of the usual time delay
\begin{equation*}
\tau_r^{\rm in}(\varphi):=T_r(\varphi)-T_r^0(\varphi)
\end{equation*}
is known to be the only time delay having a well-defined limit as $r\to\infty$ for
complicated scattering systems (see for example
\cite{AJ07,BO79,GT07,Mar75,Mar81,SM92,Smi60,Tie06}).

For the next lemma, we need the auxiliary quantity
\begin{equation}\label{tau_free}
\tau^{\rm free}_r(\varphi):=\12\int_0^\infty\d t\,
\big\langle\varphi,S^*\big[\e^{itH_0}f(\Phi/r)\e^{-itH_0}
-\e^{-itH_0}f(\Phi/r)\e^{itH_0},S\big]\varphi\big\rangle,
\end{equation}
which is finite for all $\varphi\in\H_0^-\cap\D_0$. We refer the reader to
\cite[Eq.~(4.1)]{Tie09} for a similar definition in the case of dispersive systems,
and to \cite[Eq.~(3)]{AC87}, \cite[Eq.~(6.2)]{Jen81} and \cite[Eq.~(5)]{Mar76} for
the original definition.

\begin{Lemma}\label{lemma_free}
Let $H_0$, $f$ and $H$ satisfy Assumptions \ref{chirimoya}, \ref{commute},
\ref{assumption_f} and \ref{wave}, and let $\varphi\in\H_0^-\cap\D_0$ be such that
\begin{equation}\label{H-+}
\big\|\big(L(t)W_--1\big)\e^{-itH_0}\varphi\big\|\in\lone(\R_-,\d t)
\qquad\hbox{and}\qquad\big\|(L(t)W_+-1)\e^{-itH_0}S\varphi\big\|\in\lone(\R_+,\d t).
\end{equation}
Then $T_r(\varphi)$ is finite for each $r>0$, and
\begin{equation}\label{equality}
\lim_{r\to\infty}\big\{\tau_r(\varphi)-\tau^{\rm free}_r(\varphi)\big\}=0.
\end{equation}
\end{Lemma}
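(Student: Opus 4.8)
The plan is to reduce all four integrals defining $\tau_r(\varphi)$ to free quantities, exploiting the intertwining relations and the two $\lone$-hypotheses in \eqref{H-+}, and to show that what survives is precisely $\tau^{\rm free}_r(\varphi)$ plus an $r$-dependent remainder that vanishes as $r\to\infty$. I would first fix notation: write $u(t):=\e^{-itH}W_-\varphi=W_-\e^{-itH_0}\varphi$, $\psi(t):=L(t)u(t)$, $\eta(t):=\e^{-itH_0}\varphi$ and $\eta_S(t):=\e^{-itH_0}S\varphi$. Because $\varphi\in\D_0$ is spectrally localised for $H_0$, the state $u(t)$ lives in a fixed subspace $E^H(I)\H$ with $I$ compact, so that $\|u(t)\|=\|\varphi\|$ and $\psi(t)=L(t)E^H(I)u(t)$ has norm bounded uniformly in $t$. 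The first relation in \eqref{H-+} reads $\|\psi(t)-\eta(t)\|\in\lone(\R_-)$, and, writing $W_-=W_+S$ so that $u(t)=W_+\e^{-itH_0}S\varphi$, the second reads $\|\psi(t)-\eta_S(t)\|\in\lone(\R_+)$.

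For finiteness I would split $T_{r,1}(\varphi)=\int_{-\infty}^0+\int_0^\infty$ and compare $\psi$ with $\eta$ on $\R_-$ and with $\eta_S$ on $\R_+$. The two resulting free integrals are nonnegative (as $f\ge0$) and bounded by $T_r^0(\varphi)$ and $T_r^0(S\varphi)$, which are finite by the local $H_0$-smoothness of Theorem~\ref{not_bad}.(b); the two error integrals $E_-$ and $E_+$ are dominated by $\|f\|_\infty(\|\psi\|+\|\eta\|)\|\psi-\eta\|$ (resp.\ with $\eta_S$) and hence finite by the $\lone$-hypotheses. For $T_2(\varphi)$ I would use $\langle u,(1-L(t)^*L(t))u\rangle=\|u(t)\|^2-\|\psi(t)\|^2$ together with $\|u(t)\|^2=\|\eta(t)\|^2$ on $\R_-$ and $\|u(t)\|^2=\|\eta_S(t)\|^2$ on $\R_+$, which bounds the integrand by the same $\lone$ quantities.

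The core step is an algebraic rearrangement. Setting $g_\varphi(t):=\langle\eta(t),f(\Phi/r)\eta(t)\rangle$ and defining $g_{S\varphi}$ analogously, I would expand the commutator in \eqref{tau_free} and use $S^*S\varphi=\varphi$ to obtain
\begin{equation*}
\tau^{\rm free}_r(\varphi)=\12\int_0^\infty\big[g_{S\varphi}(t)-g_{S\varphi}(-t)-g_\varphi(t)+g_\varphi(-t)\big]\,\d t.
\end{equation*}
On the other hand, the substitution $t\mapsto-t$ and the definition of $T_r^0$ show that $\int_{-\infty}^0 g_\varphi+\int_0^\infty g_{S\varphi}-\12 T_r^0(\varphi)-\12 T_r^0(S\varphi)$ equals exactly the same expression. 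Feeding the split form of $T_{r,1}(\varphi)$ into $\tau_r(\varphi)=T_{r,1}(\varphi)+T_2(\varphi)-\12 T_r^0(\varphi)-\12 T_r^0(S\varphi)$ then yields the clean identity $\tau_r(\varphi)=\tau^{\rm free}_r(\varphi)+E_-+E_++T_2(\varphi)$, valid for every $r>0$.

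To finish, I would let $r\to\infty$. Since $\slim_{r\to\infty}f(\Phi/r)=1$, dominated convergence with the $\lone$-dominations above gives $E_-\to\int_{-\infty}^0(\|\psi\|^2-\|\eta\|^2)\,\d t$ and $E_+\to\int_0^\infty(\|\psi\|^2-\|\eta_S\|^2)\,\d t$; these two limits sum to $-T_2(\varphi)$ by the computation of $T_2$ above, so $E_-+E_++T_2(\varphi)\to0$ and \eqref{equality} follows. I expect the main obstacle to lie not in this algebra but in the analytic bookkeeping that makes it rigorous: one must produce fixed $\lone(\R_\mp)$ majorants valid uniformly in $r$, which needs the uniform-in-$t$ boundedness of $\|\psi(t)\|$—equivalently of $L(t)E^H(I)$—on top of the mere $\lone$-decay of $\|\psi-\eta\|$, so that both the splitting of the improper integrals and the passage to the limit are legitimate.
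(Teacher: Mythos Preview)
Your argument is correct and follows essentially the same route as the paper: your quantity $E_-+E_+$ is precisely the paper's $I_r(\varphi):=T_{r,1}(\varphi)-\tfrac12\{T_r^0(\varphi)+T_r^0(S\varphi)\}-\tau^{\rm free}_r(\varphi)$, and both proofs bound it by ${\rm Const.}\,(g_-+g_+)$ via the elementary inequality $|\langle\psi,f\psi\rangle-\langle\eta,f\eta\rangle|\le\|f\|_\infty(\|\psi\|+\|\eta\|)\|\psi-\eta\|$, then pass to the limit by dominated convergence to recover $-T_2(\varphi)$. The only difference is cosmetic: you spell out the algebraic identity $\tau_r(\varphi)=\tau^{\rm free}_r(\varphi)+E_-+E_++T_2(\varphi)$ that the paper hides under ``direct computations''.
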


\begin{proof}
Direct computations with $\varphi\in\H_0^-\cap\D_0$ imply that
\begin{align*}
I_r(\varphi)&:=T_{r,1}(\varphi)-\12\big\{T_r^0(\varphi)+T_r^0(S\varphi)\big\}
-\tau^{\rm free}_r(\varphi)\\
&=\int_{-\infty}^0\d t\,\big\{\big\langle L(t)\e^{-itH}W_-\varphi,
f(\Phi/r)L(t)\e^{-itH}W_-\varphi\big\rangle
-\big\langle\e^{-itH_0}\varphi,f(\Phi/r)\e^{-itH_0}\varphi\big\rangle\big\}\\
&\quad+\int_0^\infty\d t\,\big\{\big\langle L(t)\e^{-itH}W_-\varphi,
f(\Phi/r)L(t)\e^{-itH}W_-\varphi\big\rangle
-\big\langle\e^{-itH_0}S\varphi,f(\Phi/r)\e^{-itH_0}S\varphi\big\rangle\big\}.
\end{align*}
Using the inequality
$$
\big|\|\varphi\|^2-\|\psi\|^2\big|
\le\|\varphi-\psi\|\cdot\big(\|\varphi\|+\|\psi\|\big),
\quad\varphi,\psi\in\H_0,
$$
the intertwining property of the wave operators and the identity $W_-=W_+ S$, one
gets the estimates
\begin{align*}
\big|\big\langle L(t)\e^{-itH}W_-\varphi,f(\Phi/r)L(t)\e^{-itH}W_-\varphi\big\rangle
-\big\langle\e^{-itH_0}\varphi,f(\Phi/r)\e^{-itH_0}\varphi\big\rangle\big|
&\le{\rm Const.}\,g_-(t),\\
\big|\big\langle L(t)\e^{-itH}W_-\varphi,f(\Phi/r)L(t)\e^{-itH}W_-\varphi\big\rangle
-\big\langle\e^{-itH_0}S\varphi,f(\Phi/r)\e^{-itH_0}S\varphi\big\rangle\big|
&\le{\rm Const.}\,g_+(t),
\end{align*}
where
$$
g_-(t):=\big\|\big(L(t)W_--1\big)\e^{-itH_0}\varphi\big\|\qquad\hbox{and}\qquad
g_+(t):=\big\|\big(L(t)W_+-1\big)\e^{-itH_0}S\varphi\big\|.
$$
It follows by \eqref{H-+} that $|I_r(\varphi)|$ is bounded by a constant independent
of $r$, and thus $T_{r,1}(\varphi)$ is finite for each $r>0$. Then, using Lebesgue's
dominated convergence theorem, the fact that $\slim_{r\to\infty}f(\Phi/r)=1$ and the
isometry of $W_-$ on $\H_0^-$, one obtains that
\begin{align*}
\lim_{r\to\infty}I_r(\varphi)
&=
\int_{-\infty}^0\d t\,\big\{\big\langle L(t)\e^{-itH}W_-\varphi,
L(t)\e^{-itH}W_-\varphi\big\rangle
-\big\langle\e^{-itH_0}\varphi,\e^{-itH_0}\varphi\big\rangle\big\}\\
&\quad+\int_0^\infty\d t\,\big\{\big\langle L(t )\e^{-itH}W_-\varphi,
L(t)\e^{-itH}W_-\varphi\big\rangle
-\big\langle\e^{-itH_0}S\varphi,\e^{-itH_0}S\varphi\big\rangle\big\}\\
&=\int_\R\d t\,\big\langle\e^{-itH}W_-\varphi,
\big(L(t)^*L(t)-1\big)\e^{-itH}W_-\varphi\big\rangle_\H\\
&\equiv-T_2(\varphi).
\end{align*}
Thus, $T_2(\varphi)$ is finite, and the equality \eqref{equality} is verified. Since
$T_r(\varphi)=T_{r,1}(\varphi)+T_2(\varphi)$, one also infers that $T_r(\varphi)$ is
finite for each $r>0$.
\end{proof}

Next Theorem shows the existence of the symmetrized time delay. It is a direct
consequence of Lemma \ref{lemma_free}, Definition \eqref{tau_free} and Theorem
\ref{for_Schwartz}. The apparently large number of assumptions reflects nothing more
but the need of describing the very general scattering system $(H_0,H,J)$; one needs
hypotheses on the relation between $H_0$ and $\Phi$, conditions on the localisation
function $f$, a compatibility assumption between $H_0$ and $H$, and conditions on the
state $\varphi$ on which the calculation are performed.

\begin{Theorem}\label{sym_case}
Let $H_0$, $f$ and $H$ satisfy Assumptions \ref{chirimoya}, \ref{commute},
\ref{assumption_f} and \ref{wave}, and let $\varphi\in\H_0^-\cap\D_2$ satisfy
$S\varphi\in\D_2$ and \eqref{H-+}. Then one has
\begin{equation}\label{Eisenbud_sym}
\lim_{r\to\infty}\tau_r(\varphi)
=-\big\langle\varphi,S^*\big[T_f,S\big]\varphi\big\rangle,
\end{equation}
with $T_f$ defined by \eqref{nemenveutpas}.
\end{Theorem}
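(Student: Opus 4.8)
The plan is to reduce the statement to the two results already at our disposal: Lemma~\ref{lemma_free}, which compares the symmetrized time delay $\tau_r(\varphi)$ with the auxiliary free quantity $\tau^{\rm free}_r(\varphi)$, and Theorem~\ref{for_Schwartz}, which evaluates the relevant free limit in terms of $T_f$. The hypotheses $\varphi\in\H_0^-\cap\D_2$, $S\varphi\in\D_2$ and \eqref{H-+} are exactly what is needed to feed both inputs. First I would invoke Lemma~\ref{lemma_free}: since \eqref{H-+} holds and $\D_2\subset\D_0$, we have $\lim_{r\to\infty}\{\tau_r(\varphi)-\tau^{\rm free}_r(\varphi)\}=0$, so it suffices to show that $\lim_{r\to\infty}\tau^{\rm free}_r(\varphi)$ exists and equals $-\langle\varphi,S^*[T_f,S]\varphi\rangle$; the existence of the limit will come out of the computation.

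The second step is a purely algebraic manipulation of the commutator in \eqref{tau_free}. Writing $A_r(t):=\e^{itH_0}f(\Phi/r)\e^{-itH_0}-\e^{-itH_0}f(\Phi/r)\e^{itH_0}$, which is bounded for each $r,t$, I would expand
\begin{equation*}
\langle\varphi,S^*[A_r(t),S]\varphi\rangle=\langle\varphi,S^*A_r(t)S\varphi\rangle-\langle\varphi,S^*SA_r(t)\varphi\rangle.
\end{equation*}
Because $S$ is unitary from $\H_0^-$ onto $\H_0^+$, the operator $S^*S$ is the orthogonal projection onto $\H_0^-$, so $S^*S\varphi=\varphi$ for $\varphi\in\H_0^-$; the two terms collapse to $\langle S\varphi,A_r(t)S\varphi\rangle$ and $\langle\varphi,A_r(t)\varphi\rangle$ respectively. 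Hence
\begin{equation*}
\tau^{\rm free}_r(\varphi)=\12\int_0^\infty\d t\,\langle S\varphi,A_r(t)S\varphi\rangle-\12\int_0^\infty\d t\,\langle\varphi,A_r(t)\varphi\rangle.
\end{equation*}

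Now I would apply Theorem~\ref{for_Schwartz} to each term. The integrand there is precisely $-A_r(t)$, so for any $\psi\in\D_2$ one has $\lim_{r\to\infty}\12\int_0^\infty\d t\,\langle\psi,A_r(t)\psi\rangle=-\langle\psi,T_f\psi\rangle$. Taking $\psi=S\varphi\in\D_2$ and $\psi=\varphi\in\D_2$, and using $\D_2\subset\D_1$ so that $T_f\varphi$ and $T_fS\varphi$ are well-defined by Proposition~\ref{lemma_T_f}, this yields
\begin{equation*}
\lim_{r\to\infty}\tau^{\rm free}_r(\varphi)=\langle\varphi,T_f\varphi\rangle-\langle S\varphi,T_fS\varphi\rangle.
\end{equation*}
Finally I would run the same projection identity backwards on the target: since $S^*S\varphi=\varphi$ and $S$ is bounded,
\begin{equation*}
-\langle\varphi,S^*[T_f,S]\varphi\rangle=-\langle S\varphi,T_fS\varphi\rangle+\langle\varphi,T_f\varphi\rangle,
\end{equation*}
which matches the previous line and, combined with the first step, gives \eqref{Eisenbud_sym}.

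The genuine analytic content is entirely carried by Theorem~\ref{for_Schwartz} and Lemma~\ref{lemma_free}; what remains is bookkeeping, and the only points requiring care are (i) the sign convention, since the operator appearing inside the commutator in \eqref{tau_free} is the negative of the integrand in \eqref{T_f}, (ii) the domain check that $\varphi,S\varphi\in\D_2\subset\D_1$ so that all the pairings with $T_f$ make sense, with $ST_f\varphi$ controlled by boundedness of $S$, and (iii) the repeated use of $S^*S\varphi=\varphi$, valid only because $\varphi\in\H_0^-$, which is exactly where that hypothesis enters. I would expect the sign tracking in (i) to be the most error-prone part of an otherwise short argument.
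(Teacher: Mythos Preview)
Your proposal is correct and follows exactly the route the paper intends: the paper's proof is the single sentence ``It is a direct consequence of Lemma~\ref{lemma_free}, Definition~\eqref{tau_free} and Theorem~\ref{for_Schwartz}'', and what you have written is precisely the unpacking of that sentence, with the sign and domain bookkeeping handled correctly (in particular your use of the self-adjointness of the projection $S^*S$ to pass from $\langle\varphi,S^*ST_f\varphi\rangle$ to $\langle\varphi,T_f\varphi\rangle$ is the clean way to justify the last identity).
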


\begin{Remark}
Theorem \ref{sym_case} is the main result of the paper. It expresses the identity of
the symmetrized time delay (defined in terms of sojourn times) and the Eisenbud-Wigner
time delay for general scattering systems $(H_0,H,J)$. The l.h.s. of
\eqref{Eisenbud_sym} is equal to the global symmetrized time delay of the scattering
system $(H_0,H,J)$, with incoming state $\varphi$, in the dilated regions associated
to the localisation operators $f(\Phi/r)$. The r.h.s. of \eqref{Eisenbud_sym} is the
expectation value in $\varphi$ of the generalised Eisenbud-Wigner time delay operator
$-S^*[T_f,S]$. When $T_f$ acts in the spectral representation of $H_0$ as the
differential operator $i\frac\d{\d H_0}$, which occurs in most of the situations of
interest (see for example \cite[Sec.~7]{RT10}), one recovers the usual Eisenbud-Wigner
Formula:
$$
\lim_{r\to\infty}\tau_r(\varphi)
\textstyle=-\big\langle\varphi,iS^*\frac{\d S}{\d H_0}\;\!\varphi\big\rangle.
$$
\end{Remark}

\begin{Remark}\label{RemJ}
Equation \eqref{defconv} is equivalent to the existence of the limits
\begin{equation*}
\widetilde W_\pm:=\slim_{t\to\pm\infty}\e^{itH_0}L(t)\e^{-itH}P_{\rm ac}(H),
\end{equation*}
together with the equalities $\widetilde W_\pm W_\pm =P_0^\pm$, where $P_0^\pm$ are
the orthogonal projections on the subspaces $\H^\pm_0$ of $\H_0$. In simple
situations, namely, when $\H^\pm_0=\H_{\rm ac}(H_0)$ and $L(t)\equiv L$ is
independent of $t$ and bounded, sufficient conditions implying \eqref{defconv} are
given in \cite[Thm.~ 2.3.6]{Yaf92}. In more complicated situations, namely, when
$\H^\pm_0\neq\H_{\rm ac}(H_0)$ or $L(t)$ depends on $t$ and is unbounded, the proof
of \eqref{defconv} could be highly non-trivial. This occurs for instance in the case
of the $N$-body systems. In such a situation, the operators $L(t)$ really depend on
$t$ and are unbounded (see for instance \cite[Sec.~6.7]{DG97}), and the proof of
\eqref{defconv} is related to the problem of the asymptotic completeness of the
$N$-body systems.
\end{Remark}

%--------------------------------------------------------------------------------------
\section{Usual time delay}\label{Sec_Usual}
\setcounter{equation}{0}
%--------------------------------------------------------------------------------------

We give in this section conditions under which the symmetrized time delay
$\tau_r(\varphi)$ and the usual time delay $\tau_r^{\rm in}(\varphi)$ are equal in the
limit $r\to \infty$. Heuristically, one cannot expect that this equality holds if the
scattering is not elastic or is of multichannel type. However, for simple scattering
systems, the equality of both time delays presents an interest. At the mathematical
level, this equality reduces to giving conditions under which
\begin{equation}\label{clementine}
\lim_{r\to\infty}\big\{T_r^0(S\varphi)-T_r^0(\varphi)\big\}=0.
\end{equation}
Equation \eqref{clementine} means that the freely evolving states
$\e^{-itH_0}\varphi$ and $\e^{-itH_0}S\varphi$ tend to spend the same time within the
region defined by the localisation function $f(\Phi/r)$ as $r\to\infty$. Formally,
the argument goes as follows. Suppose that $F_f(H_0')$, with $F_f$ defined in
\eqref{def_F}, commutes with the scattering operator $S$. Then, using the change of
variables $\mu:=t/r$, $\nu:=1/r$, and the symmetry of $f$, one gets
\begin{align*}
\lim_{r\to\infty}\big\{T_r^0(S\varphi)-T_r^0(\varphi)\big\}
&=\lim_{r\to\infty}\int_\R\d t\,
\big\langle\varphi,S^*[\e^{itH_0}f(\Phi/r)\e^{-itH_0},S]\varphi\big\rangle
-\<\varphi,S^*[F_f(H_0'),S]\varphi\>\\
&=\lim_{\nu\searrow0}\int_\R\d\mu\,
\big\langle\varphi,S^*\big[\textstyle\frac1\nu
\big\{f(\mu H_0'+\nu\Phi)-f(\mu H_0')\big\},S\big]\varphi\big\rangle\\
&=\int_\R\d\mu\,
\big\langle\varphi,S^*[\Phi\cdot f'(\mu H_0'),S]\varphi\big\rangle\\
&=0.
\end{align*}
A rigorous proof of this argument is given in Theorem \ref{equal_sojourn} below.
Before this we introduce an assumption on the behavior of the $C_0$-group
$\{\e^{ix\cdot\Phi}\}_{x\in\R^d}$ in $\dom(H_0)$, and then prove a technical lemma.
We use the notation $\G$ for $\dom(H_0)$ endowed with the graph topology, and $\G^*$
for its dual space. In the following proofs, we also freely use the notations of
\cite{ABG} for some regularity classes with respect to the group generated by $\Phi$.

\begin{Assumption}\label{chirimoya2}
The $C_0$-group $\{\e^{ix\cdot\Phi}\}_{x\in\R^d}$ is of polynomial growth in $\G$,
namely there exists $r>0$ such that for all $x\in\R^d$
\begin{equation*}
\left\|\e^{ix\cdot \Phi}\right\|_{\B(\G,\G)}\leq{\rm Const.}\;\!\langle x\rangle^r.
\end{equation*}
\end{Assumption}

\begin{Lemma}\label{batman}
Let $H_0$ and $\Phi$ satisfy Assumptions \ref{chirimoya}, \ref{commute} and
\ref{chirimoya2}, and let $\eta\in C^\infty_{\rm c}(\R)$. Then there exists
$\textsc c,s>0$ such that for all $\mu\in\R$, $x\in\R^d$ and
$\nu\in(-1,1)\setminus\{0\}$
$$
\big\|\textstyle\frac1\nu\big\{\eta\big(H_0(\nu x)\big)
\e^{i\frac\mu\nu[H_0(\nu x)-H_0]}-\eta(H_0)\e^{i\mu x\cdot H_0'}\big\}\big\|
\le\textsc c\;\!(1+|\mu|)\langle x\rangle^s.
$$
\end{Lemma}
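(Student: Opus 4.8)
My plan is to reduce the whole expression to two elementary ``differentiation'' identities for the conjugated operators and then to isolate the single hard estimate. Throughout write $A(y):=\e^{-iy\cdot\Phi}A\,\e^{iy\cdot\Phi}$, so that $H_0(y)$ is the operator of \eqref{H(x)}, and recall that by Assumptions \ref{chirimoya}, \ref{commute} and \cite[Lemma~2.4]{RT10} all the operators $H_0(y)$, $(\partial_jH_0)(y')$ and $(\partial_{jk}H_0)(y'')$ mutually commute, which legitimises the joint functional calculus used below. Using the invariance of $\dom(H_0)$ under $\{\e^{iy\cdot\Phi}\}$ one has, on $\dom(H_0)$, the identities $\tfrac{\d}{\d\tau}H_0(\tau x)=(x\cdot H_0')(\tau x)$ and $\tfrac{\d}{\d\tau}(x\cdot H_0')(\tau x)=\sum_{j,k}x_jx_k\,(\partial_{jk}H_0)(\tau x)$. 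Integrating the first of these,
$$
P(\nu):=\tfrac1\nu\big(H_0(\nu x)-H_0\big)=\int_0^1\d s\;\!(x\cdot H_0')(s\nu x)
$$
is self-adjoint, commutes with $H_0$, and satisfies $P(0)=x\cdot H_0'$, so the operator to be bounded is $\tfrac1\nu\big(\eta(H_0(\nu x))\e^{i\mu P(\nu)}-\eta(H_0)\e^{i\mu\,x\cdot H_0'}\big)$.

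I would then telescope this into $A+B$ with $A:=\tfrac1\nu\big(\eta(H_0(\nu x))-\eta(H_0)\big)\e^{i\mu P(\nu)}$ and $B:=\eta(H_0)\,\tfrac1\nu\big(\e^{i\mu P(\nu)}-\e^{i\mu\,x\cdot H_0'}\big)$. The factor $\e^{i\mu P(\nu)}$ is unitary, and the first identity gives $\eta(H_0(\nu x))-\eta(H_0)=\nu\int_0^1\d s\;\!(x\cdot H_0')(s\nu x)\,\eta'(H_0(s\nu x))$; since both factors of the integrand are conjugated by the \emph{same} unitary $\e^{-is\nu x\cdot\Phi}$, their product has norm $\|(x\cdot H_0')\eta'(H_0)\|$, independent of $s,\nu$. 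This last quantity is finite and bounded by $\mathrm{Const}\,\langle x\rangle$: the $C^1(\Phi)$-property of $H_0$ yields that $(\partial_jH_0)(H_0-\omega)^{-2}$ is bounded for $\omega\in\C\setminus\sigma(H_0)$, and since $\partial_jH_0$ commutes with $H_0$ and $\eta'\in C^\infty_{\rm c}$ one factorises $(\partial_jH_0)\eta'(H_0)=\big[(\partial_jH_0)(H_0-\omega)^{-2}\big]\,(H_0-\omega)^2\eta'(H_0)$ into a product of two bounded operators. Hence $\|A\|\le\mathrm{Const}\,\langle x\rangle$ uniformly in $\mu$ and $\nu$, which produces the summand ``$1$'' in $1+|\mu|$; note that no hypothesis on the growth of the $\Phi$-group enters here.

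For $B$, all operators in the exponents commute, so the functional-calculus identity $\e^{ia}-\e^{ib}=i(a-b)\int_0^1\e^{i(b+t(a-b))}\d t$ gives, with the remaining integral a contraction, $\|B\|\le\tfrac{|\mu|}{|\nu|}\big\|\eta(H_0)\big(P(\nu)-x\cdot H_0'\big)\big\|$. The second identity above supplies the second-order expansion $P(\nu)-x\cdot H_0'=\int_0^1\d s\int_0^{s\nu}\d\tau\sum_{j,k}x_jx_k\,(\partial_{jk}H_0)(\tau x)$, whence
$$
\big\|\eta(H_0)\big(P(\nu)-x\cdot H_0'\big)\big\|\le\mathrm{Const}\,|\nu|\,\langle x\rangle^2\sup_{|z|\le|x|}\max_{j,k}\big\|\eta(H_0)(\partial_{jk}H_0)(z)\big\|,
$$
the factor $|\nu|$ cancelling the $\tfrac1{|\nu|}$ in $B$. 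Everything therefore reduces to the atomic estimate $\big\|\eta(H_0)(\partial_{jk}H_0)(z)\big\|\le\mathrm{Const}\,\langle z\rangle^{s'}$ for $|z|\le|x|$, which then gives $\|B\|\le\mathrm{Const}\,|\mu|\,\langle x\rangle^{2+s'}$ and, together with the bound on $A$, the claim with $s=2+s'$.

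The hard part is exactly this atomic estimate, for it pairs the \emph{unconjugated} operator $\partial_{jk}H_0$ with a function of the \emph{conjugated} operator $H_0(-z)$. Iterating the three $C^1(\Phi)$-hypotheses of Assumption \ref{chirimoya} (for $H_0$, for $\partial_jH_0$, and for $\partial_{jk}H_0$) shows that $(\partial_{jk}H_0)(H_0-\omega)^{-4}$ is bounded---here one uses that $|\lambda-\omega|$ is bounded below on $\R$---so that $\partial_{jk}H_0\in\B\big(\dom(H_0^4),\H_0\big)$. Taking adjoints and using $\e^{iz\cdot\Phi}\eta(H_0)=\eta(H_0(-z))\e^{iz\cdot\Phi}$ reduces the atomic estimate to $\big\|(\partial_{jk}H_0)\eta(H_0(-z))\big\|\le\|\partial_{jk}H_0\|_{\B(\dom(H_0^4),\H_0)}\,\big\|(H_0(z)-\omega)^4\eta(H_0)\big\|$, so the surviving factor is $\big\|(H_0(z)-\omega)^4\eta(H_0)\big\|$, the fourth power of the shifted operator $H_0(z)$ against the unshifted cut-off $\eta(H_0)$. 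I expect to control this by expanding $H_0(z)=H_0+\int_0^1(z\cdot H_0')(sz)\,\d s$ and commuting the resulting first-order pieces through the powers of $H_0-\omega$, each commutation lowering the order of the surviving power and producing, via conjugation, a factor $\|\e^{iz\cdot\Phi}\|_{\B(\G,\G)}\le\mathrm{Const}\,\langle z\rangle^{r}$ furnished by Assumption \ref{chirimoya2}; after finitely many steps this should yield a bound polynomial in $\langle z\rangle$. Making this propagation of the polynomial growth from $\G=\dom(H_0)$ up to the higher-order graph spaces rigorous, with careful bookkeeping in the regularity classes of \cite{ABG}, is the delicate core of the argument and the only place where Assumption \ref{chirimoya2} is indispensable.
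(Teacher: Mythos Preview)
Your decomposition into $A+B$ and the treatment of each piece mirror the paper's proof: the paper also splits off $\tfrac1\nu\big(\eta(H_0(\nu x))-\eta(H_0)\big)$ (using $\eta(H_0)\in C^1_{\rm u}(\Phi)$ instead of your chain-rule form, but the outcome is the same $O(\langle x\rangle)$ bound), and for the exponential part it computes the $\nu$-derivative of $g_{x,\mu}(\nu):=\e^{i\frac\mu\nu[H_0(\nu x)-H_0]}\eta(H_0)$ and applies the mean value theorem, which is analytically equivalent to your functional-calculus identity for $\e^{ia}-\e^{ib}$. Both routes land on the same ``atomic estimate'': a polynomial bound on $\big\|(\partial_{jk}H_0)(z)\,\eta(H_0)\big\|$ (equivalently on its adjoint $\big\|\eta(H_0)(\partial_{jk}H_0)(z)\big\|$).

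The only real gap is your handling of that atomic estimate. You go via $\partial_{jk}H_0\in\B(\dom(H_0^4),\H_0)$ and are then left with $\|(H_0(z)-\omega)^4\eta(H_0)\|$, which forces you to propagate the polynomial growth of Assumption~\ref{chirimoya2} from $\G=\dom(H_0)$ to the higher graph spaces $\dom(H_0^n)$; you yourself flag this as the ``delicate core'' and leave it unfinished. This is an unnecessary detour, and it is not clear the assumptions suffice to complete it cleanly: nothing in Assumptions~\ref{chirimoya}--\ref{chirimoya2} tells you that $\dom(H_0^n)$ is invariant under the group, or that $\|\e^{iz\cdot\Phi}\|_{\B(\dom(H_0^n))}$ grows polynomially, for $n\ge2$.

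The paper bypasses all of this with a one-line factorisation. The point you are missing is that $\partial_{jk}H_0\in\B(\G,\H_0)$, not merely $\B(\dom(H_0^4),\H_0)$: since $\partial_jH_0\in\B(\G,\H_0)$ and $\partial_jH_0\in C^1_{\rm u}(\Phi;\G,\H_0)$ (by Assumption~\ref{chirimoya} and \cite[Lemma~5.1.2(b)]{ABG}), the commutator $i[\partial_jH_0,\Phi_k]$ extends to an element of $\B(\G,\H_0)$. Once you have this, the atomic estimate is immediate:
\[
(\partial_{jk}H_0)(z)\,\eta(H_0)
=\e^{-iz\cdot\Phi}\,(\partial_{jk}H_0)\,\e^{iz\cdot\Phi}\,\eta(H_0),
\]
factored as $\H_0\xrightarrow{\eta(H_0)}\G\xrightarrow{\e^{iz\cdot\Phi}}\G\xrightarrow{\partial_{jk}H_0}\H_0\xrightarrow{\e^{-iz\cdot\Phi}}\H_0$, with the single non-unitary middle arrow bounded by ${\rm Const.}\,\langle z\rangle^r$ directly from Assumption~\ref{chirimoya2}. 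This gives $s=r+2$ and closes the proof with no further bookkeeping.
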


\begin{proof}
For $x\in\R^d$ and $\mu\in\R$, we define the function
$$
g_{x,\mu}:(-1,1)\setminus\{0\}\to\B(\H_0),\quad
\nu\mapsto\e^{i\frac\mu\nu[H_0(\nu x)-H_0]}\eta(H_0).
$$
Reproducing the argument of point (ii) of the proof of \cite[Thm.~5.5]{RT10}, one
readily shows that $H_0\in C_{\rm u}^1(\Phi;\G,\H_0)$, and then that $g_{x,\mu}$ is
continuous with
$$
g_{x,\mu}(0):=\lim_{\nu\to 0} g_{x,\mu}(\nu)=\e^{i\mu x\cdot H_0'}\eta(H_0).
$$
On another hand, since $\eta(H_0)$ belongs to $C^1_{\rm u}(\Phi)$, one has in
$\B(\H_0)$ the equalities
$$
\textstyle\frac1\nu\big\{\eta\big(H_0(\nu x)\big)-\eta(H_0)\big\}
=\frac1\nu\displaystyle\int_0^1\d t\,\frac\d{\d t}\;\!\eta\big(H_0(t\nu x)\big)
=i\sum_jx_j\int_0^1\d t\,\e^{-it\nu x\cdot\Phi}\big[\eta(H_0),\Phi_j\big]
\e^{it\nu x\cdot\Phi}.
$$
So, combining the two equations, one obtains that
\begin{align}
&\textstyle\frac1\nu\big\{\eta\big(H_0(\nu x)\big)\e^{i\frac\mu\nu[H_0(\nu x)-H_0]}
-\eta(H_0)\e^{i\mu x\cdot H_0'}\big\}\nonumber\\
&=\textstyle\frac1\nu\big\{\eta\big(H_0(\nu x)\big)-\eta(H_0)\big\}
\e^{i\frac\mu\nu[H_0(\nu x)-H_0]}+\frac1\nu\big\{g_{x,\mu}(\nu)-g_{x,\mu}(0)\big\}
\nonumber\\
&=i\sum_jx_j\int_0^1\d t\,\e^{-it\nu x\cdot\Phi}\big[\eta(H_0),\Phi_j\big]
\e^{it\nu x\cdot\Phi}\e^{i\frac\mu\nu[H_0(\nu x)-H_0]}
+\textstyle\frac1\nu\big\{g_{x,\mu}(\nu)-g_{x,\mu}(0)\big\}.\label{crocro}
\end{align}
In order to estimate the difference $g_{x,\mu}(\nu)-g_{x,\mu}(0)$, observe first that
one has in $\B(\H_0)$ for any bounded set $I\subset\R$
$$
\textstyle\frac1\nu\big[H_0(\nu x)-H_0\big]E^{H_0}(I)
=\frac1\nu\displaystyle\int_0^1\d t\,\frac\d{\d t}\;\!H_0(t\nu x)E^{H_0}(I)
=\int_0^1\d t\,x\cdot H'_0(t\nu x)E^{H_0}(I).
$$
So, if $\varepsilon\in\R$ is small enough and if the bounded set $I\subset\R$ is
chosen such that $\eta(H_0)=E^{H_0}(I)\eta(H_0)$, one obtains in $\B(\H_0)$
\begin{align*}
&g_{x,\mu}(\nu+\varepsilon)-g_{x,\mu}(\nu)\\
&=\big\{\e^{i\mu\int_0^1\d t\,x\cdot H'_0(t(\nu+\varepsilon)x)E^{H_0}(I)}
-\e^{i\mu\int_0^1\d t\,x\cdot H'_0(t\nu x)E^{H_0}(I)}\big\}\eta(H_0)\\
&=\e^{i\mu\int_0^1\d u\,x\cdot H'_0(u\nu x)E^{H_0}(I)}
\big\{\e^{i\mu\int_0^1\d t\,x\cdot[H'_0(t(\nu+\varepsilon)x)
-H'_0(t\nu x)]E^{H_0}(I)}-1\big\}\eta(H_0)\\
&=\e^{i\mu\int_0^1\d u\,x\cdot H'_0(u\nu x)E^{H_0}(I)}
\big\{\e^{i\mu\int_0^1\d t\int_0^1\d s\,t\varepsilon\sum_{j,k}x_jx_k
(\partial_{jk}H_0)(t(\nu+s\varepsilon)x)E^{H_0}(I)}-1\big\}\eta(H_0).
\end{align*}
Note that the property $\partial_jH_0\in C^1_{\rm u}(\Phi;\G,\H_0)$ (which follows
from Assumption \ref{chirimoya} and \cite[Lemma~5.1.2.(b)]{ABG}) has been taken into
account for the last equality. Then, multiplying the above expression by
$\varepsilon^{-1}$ and taking the limit $\varepsilon\to0$ in $\B(\H_0)$ leads to
\begin{equation}
g'_{x,\mu}(\nu)
=i\mu\e^{i\mu\int_0^1\d u\,x\cdot H'_0(u\nu x)}
\int_0^1\d t\,t\sum_{j,k}x_j x_k(\partial_{jk}H_0)(t\nu x)\;\!\eta(H_0).
\label{etencore}
\end{equation}
This formula, together with Equation \eqref{crocro} and the mean value theorem,
implies that
\begin{align}
&\textstyle\big\|\frac1\nu\big\{\eta\big(H_0(\nu x)\big)
\e^{i\frac\mu\nu[H_0(\nu x)-H_0]}-\eta(H_0)\e^{i\mu x\cdot H_0'}\big\}\big\|
\nonumber\\
&\le{\rm Const.}\;\!|x|+\sup_{\xi\in[0,1]}\big\|g_{x,\mu}'(\xi\nu)\big\|
\nonumber\\
&\le{\rm Const.}\;\!|x|+{\rm Const.}\,x^2|\mu|\sup_{\xi\in[0,1]}\sum_{j,k}
\big\|(\partial_{jk}H_0)(\xi\nu x)\;\!\eta(H_0)\big\|.\label{bound_D}
\end{align}
But one has
\begin{equation*}
(\partial_{jk} H_0)(\xi\nu x)\;\!\eta(H_0)
=\e^{-i\xi\nu x\cdot \Phi}(\partial_{jk} H_0)\e^{i\xi\nu x\cdot \Phi}\eta(H_0)
\end{equation*}
with $\eta(H_0)\in\B(\H_0,\G)$ and $(\partial_{jk}H_0)\in \B(\G,\H_0)$. So, it
follows from Assumption \ref{chirimoya2} that there exists $r>0$ such that
$$
\big\|(\partial_{jk}H_0)(\xi\nu x)\;\!\eta(H_0)\big\|
\leq{\rm Const.}\;\!\langle\xi\nu x\rangle^r.
$$
Hence, one finally gets from \eqref{bound_D} that for each
$\nu\in(-1,1)\setminus\{0\}$
$$
\textstyle\big\|\frac1\nu\big\{\eta\big(H_0(\nu x)\big)
\e^{i\frac\mu\nu[H_0(\nu x)-H_0]}-\eta(H_0)\e^{i\mu x\cdot H_0'}\big\}\big\|
\le{\rm Const.}(1+|\mu|)\langle x\rangle^{r+2},
$$
which proves the claim with $s:=r+2$.
\end{proof}

In the sequel, the symbol $\F$ stands for the Fourier transformation, and the measure
$\underline\d x$ on $\R^d$ is chosen so that $\F$ extends to a unitary operator in
$\ltwo(\R^d)$.

\begin{Theorem}\label{equal_sojourn}
Let $H_0,f,H$ and $\Phi$ satisfy Assumptions \ref{chirimoya}, \ref{commute},
\ref{assumption_f}, \ref{wave} and \ref{chirimoya2}, and let $\varphi\in\H_0^- \cap \D_2$ satisfy
$S\varphi\in\D_2$ and
\begin{equation}\label{grobobo}
\big[F_f(H'_0),S\big]\varphi=0.
\end{equation}
Then the following equality holds:
$$
\lim_{r\to\infty}\big\{T_r^0(S\varphi)-T_r^0(\varphi)\big\}=0.
$$
\end{Theorem}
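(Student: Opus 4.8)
The plan is to make rigorous the heuristic computation displayed just before the statement, the decisive structural inputs being Assumption \ref{commute}, Lemma \ref{batman}, and the evenness of $f$. First I would rewrite the difference of the two free sojourn times as a single commutator integral. Since $S$ commutes with $H_0$, hence with $\e^{\pm itH_0}$, and since $S^*S\varphi=\varphi$ for $\varphi\in\H_0^-$, transferring the propagators onto $f(\Phi/r)$ gives
$$
T_r^0(S\varphi)-T_r^0(\varphi)
=\int_\R\d t\,\big\langle\varphi,S^*\big[\e^{itH_0}f(\Phi/r)\e^{-itH_0},S\big]\varphi\big\rangle,
$$
both terms being finite by Theorem \ref{not_bad}.(b) because $\varphi,S\varphi\in\D_0$.

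Next I would expose the oscillatory structure through a Fourier decomposition of $f$. Writing $f(\Phi/r)=\int_{\R^d}\underline\d x\,(\F f)(x)\,\e^{i(x/r)\cdot\Phi}$ and invoking Assumption \ref{commute} (so that $H_0(x/r)$ and $H_0$ commute) yields the exact identity
$$
\e^{itH_0}\e^{i(x/r)\cdot\Phi}\e^{-itH_0}=\e^{i(x/r)\cdot\Phi}\,\e^{it[H_0(x/r)-H_0]}.
$$
After the change of variables $\nu:=1/r$, $\mu:=t/r$ (whence $\d t=\nu^{-1}\d\mu$) and the subtraction of $\nu^{-1}\langle\varphi,S^*[F_f(H_0'),S]\varphi\rangle$, which vanishes by hypothesis \eqref{grobobo} together with $F_f(H_0')=\int_\R\d\mu\,f(\mu H_0')$, the quantity takes the form
$$
T_r^0(S\varphi)-T_r^0(\varphi)=\int_\R\d\mu\,\big\langle\varphi,S^*[D_\nu(\mu),S]\varphi\big\rangle,
$$
where $D_\nu(\mu):=\nu^{-1}\int_{\R^d}\underline\d x\,(\F f)(x)\big\{\e^{i\nu x\cdot\Phi}\e^{i\frac\mu\nu[H_0(\nu x)-H_0]}-\e^{i\mu x\cdot H_0'}\big\}$.

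I would then let $\nu\searrow0$ inside the $\mu$-integral. Using $\varphi=\eta(H_0)\varphi$ and $S\varphi=\eta(H_0)S\varphi$ for a suitable $\eta\in C^\infty_{\rm c}(\R\setminus\kappa(H_0))$, I would decompose the $\F f$-integrand defining $\eta(H_0)D_\nu(\mu)$ into $\nu^{-1}(\e^{i\nu x\cdot\Phi}-1)\eta(H_0)\e^{i\mu x\cdot H_0'}$ and $\e^{i\nu x\cdot\Phi}\,\nu^{-1}\big\{\eta(H_0(\nu x))\e^{i\frac\mu\nu[H_0(\nu x)-H_0]}-\eta(H_0)\e^{i\mu x\cdot H_0'}\big\}$. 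As $\nu\searrow0$ the first summand tends to $ix\cdot\Phi\,\eta(H_0)\e^{i\mu x\cdot H_0'}$, which after integration against $\F f$ produces $\Phi\cdot f'(\mu H_0')\eta(H_0)$, while the second is exactly the object estimated by Lemma \ref{batman} and converges to a curvature correction involving $H_0''$. Since $f$ is even — so that $\F f$ is even and $f'=\nabla f$ is odd — both limit contributions are odd functions of $\mu$; hence the limiting integrand $\langle\varphi,S^*[\,\cdot\,,S]\varphi\rangle$ is odd in $\mu$, its integral over $\R$ vanishes, and the theorem follows as soon as the interchange of $\lim_{\nu\searrow0}$ and $\int_\R\d\mu$ is justified.

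That interchange is the main obstacle. Lemma \ref{batman} provides the bound $\nu^{-1}\|\eta(H_0(\nu x))\e^{i\frac\mu\nu[H_0(\nu x)-H_0]}-\eta(H_0)\e^{i\mu x\cdot H_0'}\|\le\textsc c\,(1+|\mu|)\langle x\rangle^s$, uniform in $\nu\in(-1,1)\setminus\{0\}$, and the rapid decay of $\F f$ disposes of the $x$-integration; but this bound grows linearly in $|\mu|$ and so is not integrable in $\mu$ on its own. To dominate the integrand uniformly in $\nu$ I would split the $\mu$-integral at a large threshold: on the bounded part Lemma \ref{batman} and dominated convergence apply directly, while on the tail I would use the local $H_0$-smoothness of Theorem \ref{not_bad}.(b) — the very input that makes $T_r^0$ finite — together with the Schwartz decay of $f$ at the argument $\mu H_0'$ on the range of $\eta(H_0)$, where $|H_0'|$ is bounded below by Lemma \ref{Heigen}.(d). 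This last point is also where the regularity $\varphi\in\D_2$ (rather than merely $\D_1$) is needed, namely to absorb the factor $x\cdot\Phi$ generated by $\nu^{-1}(\e^{i\nu x\cdot\Phi}-1)$ and to secure integrability in the pair $(\mu,x)$.
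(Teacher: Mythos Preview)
Your overall architecture matches the paper's: rewrite the difference as a commutator, pass to the variables $(\mu,\nu)$, subtract the vanishing $F_f(H_0')$-term, split the integrand into a ``$(\e^{i\nu x\cdot\Phi}-1)/\nu$'' piece and a ``Lemma~\ref{batman}'' piece, identify the pointwise limits, and kill them by the joint parity $\mu\mapsto-\mu$, $x\mapsto-x$ coming from the evenness of $f$. That part is fine, and your slight rearrangement of the two pieces relative to the paper's is harmless.

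The gap is your justification of the interchange $\lim_{\nu\searrow0}\int_\R\d\mu$. Lemma~\ref{batman} only gives a bound $\textsc c(1+|\mu|)\langle x\rangle^s$, which is not $\lone$ in $\mu$; you acknowledge this and propose to control the region $|\mu|\gg1$ by ``local $H_0$-smoothness'' and ``Schwartz decay of $f$ at the argument $\mu H_0'$''. Neither of these does the job. $H_0$-smoothness makes each $T_r^0$ finite but yields no bound on the \emph{difference quotient} $\nu^{-1}(\cdots)$ that is uniform in $\nu$; and the Schwartz decay of $f(\mu H_0')$ only governs the subtracted reference term $\int(\F f)(x)\e^{i\mu x\cdot H_0'}\underline\d x$, not the full conjugated object $\e^{i\nu x\cdot\Phi}\e^{i\frac\mu\nu[H_0(\nu x)-H_0]}$, in which $\Phi$ and $H_0'$ do not commute. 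What the paper actually does for $|\mu|>1$ is an \emph{iterated integration by parts in the Fourier variable $x$}: writing $A_{\nu,\mu}^I(x):=\e^{i\frac\mu\nu[H_0(\nu x)-H_0]}E^{H_0}(I)$ and $B_\mu^I(x):=\e^{i\mu x\cdot H_0'}E^{H_0}(I)$, one has $\partial_{x_j}A_{\nu,\mu}^I=i\mu(\partial_jH_0)(\nu x)A_{\nu,\mu}^I$ and similarly for $B_\mu^I$, so each integration by parts trades a derivative landing on $\F f$ or on the smooth coefficients $V_{\nu x}C_jV_{\nu x}^*$ for a factor $\mu^{-1}$. After three such steps one reaches an $O(\mu^{-2})$ bound uniform in $\nu$, and Lebesgue applies. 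This mechanism is where Assumption~\ref{chirimoya2} (polynomial growth of $\e^{ix\cdot\Phi}$ on $\G$) and the $C^2(\Phi)$-regularity of $H_0$ and $\partial_jH_0$ are genuinely used; your proposal does not invoke them and, as written, the dominated-convergence step is unproved.
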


Note that the l.h.s. of \eqref{grobobo} is well-defined due to the homogeneity
property of $F_f$. Indeed, one has
$$
\big[F_f(H'_0),S\big]\varphi
=\big[|H'_0|^{-1}\eta(H_0)F_f\big(\textstyle\frac{H'_0}{|H'_0|}\big),S\big]\varphi
$$
for some $\eta\in C^\infty_{\rm c}\big(\R\setminus\kappa(H_0)\big)$, and thus
$\big[F_f(H'_0),S\big]\varphi\in\H$ due to Lemma \ref{Heigen}.(d) and the compacity
of $F_f(\mathbb S^{d-1})$.

\begin{proof}
Let $\varphi\in\H_0^-\cap \D_2$ satisfies $S\varphi\in\D_2$, take a real
$\eta\in C^\infty_{\rm c}\big(\R\setminus\kappa(H_0)\big)$ such that
$\varphi=\eta(H_0)\varphi$, and set $\eta_t(H_0):=\e^{itH_0}\eta(H_0)$. Using
\eqref{grobobo}, the definition of $F_f$ and the change of variables $\mu:=t/r$,
$\nu:=1/r$, one gets
\begin{align}
&T_{1/\nu}^0(S\varphi)-T_{1/\nu}^0(\varphi)\nonumber\\
&=\int_\R\d\mu\,\big\langle\varphi,S^*\big[\textstyle\frac1\nu
\big\{\eta_{\frac\mu\nu}(H_0)f(\nu \Phi)\eta_{-\frac\mu\nu}(H_0)
-f(\mu H'_0)\big\},S\big]\varphi\big\rangle\nonumber\\
&=\int_\R\d\mu\int_{\R^d}\underline\d x\,(\F f)(x)
\big\langle\varphi,S^*\big[\textstyle\frac1\nu\big\{\e^{i\nu x\cdot \Phi}
\eta_{\frac\mu\nu}\big(H_0(\nu x)\big)\eta_{-\frac\mu\nu}(H_0)
-\e^{i\mu x\cdot H_0'}\big\},S\big]\varphi\big\rangle\nonumber\\
&=\int_\R\d\mu\int_{\R^d}\underline\d x\,(\F f)(x)
\big\langle\varphi,S^*\big[\textstyle\frac1\nu(\e^{i\nu x\cdot \Phi}-1)
\eta\big(H_0(\nu x)\big)\e^{i\frac\mu\nu[H_0(\nu x)-H_0]},S\big]\varphi\big\rangle
\label{first_term}\\
&\quad+\int_\R\d\mu\int_{\R^d}\underline\d x\,(\F f)(x)
\big\langle\varphi,S^*\big[\textstyle\frac1\nu
\big\{\eta\big(H_0(\nu x)\big)\e^{i\frac\mu\nu[H_0(\nu x)-H_0]}
-\eta(H_0)\e^{i\mu x\cdot H'_0}\big\},S\big]\varphi\big\rangle.\nonumber
\end{align}
To prove the statement, it is sufficient to show that the limit as $\nu\searrow0$ of
each of these two terms is equal to zero. This is done in points (i) and (ii) below.

(i) For the first term, one can easily adapt the method \cite[Thm.~5.5]{RT10}
(points (ii) and (iii) of the proof) in order to apply Lebesgue's dominated
convergence theorem to \eqref{first_term}. So, one gets
\begin{align*}
&\lim_{\nu\searrow0}\int_\R\d\mu\int_{\R^d}\underline\d x\,(\F f)(x)
\big\langle\varphi,S^*\big[\textstyle\frac1\nu(\e^{i\nu x\cdot \Phi}-1)
\eta\big(H_0(\nu x)\big)\e^{i\frac\mu\nu[H_0(\nu x)-H_0]},S\big]\varphi\big\rangle\\
&=i\int_\R\d\mu\int_{\R^d}\underline\d x\,(\F f)(x)
\big\{\big\langle(x\cdot \Phi)S\varphi,\e^{i\mu x\cdot H'_0}S\varphi\big\rangle
-\big\langle(x\cdot \Phi)\varphi,\e^{i\mu x\cdot H'_0}\varphi\big\rangle\big\},
\end{align*}
and the change of variables $\mu':=-\mu$, $x':=-x$, together with the symmetry of
$f$, implies that this expression is equal to zero.

(ii) For the second term, it is sufficient to prove that
\begin{equation}\label{second_term}
\lim_{\nu\searrow0}\int_\R\d\mu\int_{\R^d}\underline\d x\,(\F f)(x)
\big\langle\psi,\textstyle\frac1\nu\big\{\eta\big(H_0(\nu x)\big)
\e^{i\frac\mu\nu[H_0(\nu x)-H_0]}-\eta(H_0)\e^{i\mu x\cdot H'_0}\big\}
\psi\big\rangle
\end{equation}
is equal to zero for any $\psi \in \D_2$ satisfying $\eta(H_0)\psi=\psi$. For
the moment, let us assume that we can interchange the limit and the integrals
in \eqref{second_term} by invoking Lebesgue's dominated convergence theorem. Then,
taking Equations \eqref{crocro} and \eqref{etencore} into account, one obtains
\begin{align*}
&\lim_{\nu\searrow0}\int_\R\d\mu\int_{\R^d}\underline\d x\,(\F f)(x)
\big\langle\psi,\textstyle\frac1\nu\big\{\eta\big(H_0(\nu x)\big)
\e^{i\frac\mu\nu[H_0(\nu x)-H_0]}-\eta(H_0)\e^{i\mu x\cdot H'_0}\big\}
\psi\big\rangle\\
&=\int_\R\d\mu\int_{\R^d}\underline\d x\,(\F f)(x)
\big\langle\psi,\big\{i\big[\eta(H_0),x\cdot\Phi\big]\e^{i\mu x\cdot H_0'}
+\textstyle\frac{i\mu}2\e^{i\mu x\cdot H'_0} \sum_{j,k}x_jx_k(\partial_{jk}H_0)
\;\!\eta(H_0)\big\}\psi\big\rangle,
\end{align*}
and the change of variables $\mu':=-\mu$, $x':=-x$, together with the symmetry of
$f$, implies that this expression is equal to zero. So, it only remains to show that
one can really apply Lebesgue's dominated convergence theorem in order to
interchange the limit and the integrals in \eqref{second_term}. For this, let us set
for $\nu\in(-1,1)\setminus\{0\}$ and $\mu\in\R$
\begin{equation*}
L(\nu,\mu):=\int_{\R^d}\underline\d x\,(\F f)(x)
\big\langle\psi,\textstyle\frac1\nu\big\{\eta\big(H_0(\nu x)\big)
\e^{i\frac\mu\nu[H_0(\nu x)-H_0]}
-\eta(H_0)\e^{i\mu x\cdot H'_0}\big\}\psi\big\rangle.
\end{equation*}
By using Lemma \ref{batman} and the fact that $\F f\in\S(\R^d)$, one gets that
$|L(\nu,\mu)|\le{\rm Const.}\;\!(1+|\mu|)$ with a constant independent of $\nu$.
Therefore $|L(\nu,\mu)|$ is bounded uniformly in $\nu\in(-1,1)\setminus\{0\}$ by a
function in $\lone([-1,1],\d\mu)$.

For the case $|\mu|>1$, we first remark that there exists a compact set
$I\subset\R\setminus\kappa(H_0)$ such that $\eta(H_0)=E^{H_0}(I)\eta(H_0)$. Due
to Lemma \ref{Heigen}.(d), there also exists
$\zeta\in C^\infty_{\rm c}\big((0,\infty)\big)$ such that
$$
\eta\big(H_0(\nu x)\big)=\eta\big(H_0(\nu x)\big)\zeta\big(H_0'(\nu x)^2\big)
$$
for all $x\in\R^d$ and $\nu\in\R$. So, using the notations
$$
A_{\nu,\mu}^I(x)
:=\e^{i\frac\mu\nu[H_0(\nu x)-H_0]}E^{H_0}(I)
\equiv\e^{i\frac\mu\nu[H_0(\nu x)-H_0]E^{H_0}(I)}E^{H_0}(I)
$$
and
$$
B_{\mu}^I(x)
:=\e^{i\mu x\cdot H'_0}E^{H_0}(I)
\equiv\e^{i\mu x\cdot H'_0 E^{H_0}(I)}E^{H_0}(I),
$$
one can rewrite $L(\nu,\mu)$ as
$$
L(\nu,\mu)=\int_{\R^d}\underline\d x\,(\F f)(x)
\big\langle\psi,\textstyle\frac1\nu\big\{\eta\big(H_0(\nu x)\big)
\zeta\big(H_0'(\nu x)^2\big) A_{\nu,\mu}^I(x)
-\eta(H_0)\zeta(H_0'^2)B_{\mu}^I(x)\big\}\psi\big\rangle.
$$
Now, using the same technics as in the proof of Lemma \ref{batman}, one shows
that the maps $A_{\nu,\mu}^I:\R^d\to\B(\H_0)$ and $B_{\mu}^I:\R^d\to\B(\H_0)$
are differentiable, with derivatives
$$
\big(\partial_j A_{\nu,\mu}^I\big)(x)
=i\mu(\partial_jH_0)(\nu x)\;\!A_{\nu,\mu}^I(x)
\qquad\hbox{and}\qquad
\big(\partial_jB_{\mu}^I\big)(x)=i\mu(\partial_jH_0)\;\!B_{\mu}^I(x).
$$
Thus, setting
$$
C_j:=(H_0')^{-2}\zeta(H_0'^2)(\partial_jH_0)\;\!\eta(H_0)\in\B(\H_0)
\qquad\hbox{and}\qquad V_x:=\e^{-ix\cdot\Phi},
$$
one can even rewrite $L(\nu,\mu)$ as
$$
L(\nu,\mu)=(i\mu)^{-1}\sum_j\int_{\R^d}\underline\d x\,(\F f)(x)
\big\langle\psi,{\textstyle\frac1\nu}\big\{V_{\nu x}C_j V_{\nu x}^*
\big(\partial_j A_{\nu,\mu}^I\big)(x)-C_j\big(\partial_jB_{\mu}^I\big)(x)\big\}
\psi\big\rangle.
$$
We shall now use repeatedly the following argument: Let $g\in\S(\R^n)$ and let
$X:=(X_1,\ldots,X_n)$ be a family of self-adjoint and mutually commuting operators
in $\H_0$. If all $X_j$ are of class $C^2(\Phi)$, then the operator $g(X)$
belongs to $C^2(\Phi)$, and $\big[[g(X),\Phi_j],\Phi_k\big]\in\B(\H_0)$ for all
$j,k$. Such a statement has been proved in  \cite[Prop.~5.1]{RT10} in a greater
generality. Here, the operator $C_j$ is of the type $g(X)$, since all the operators
$H_0,\partial_jH_0,\ldots,\partial_dH_0$ are of class $C^2(\Phi)$. Thus, we can
perform a first integration by parts (with vanishing boundary contributions) with
respect to $x_j$ to obtain
\begin{align*}
L(\nu,\mu)&=-(i\mu)^{-1}\sum_j\int_{\R^d}\underline\d x\,
\big[\partial_j(\F f)\big](x)\big\langle\psi,{\textstyle\frac1\nu}
\big\{V_{\nu x}C_j V_{\nu x}^*A_{\nu,\mu}^I\big(x)-C_jB_\mu^I(x)\big\}
\psi\big\rangle \\
&\qquad-\mu^{-1}\sum_j\int_{\R^d}\underline\d x\,(\F f)(x)\big\langle\psi,
V_{\nu x}[C_j,\Phi_j]V_{\nu x}^*A_{\nu,\mu}^I(x)\psi\big\rangle.
\end{align*}
Now, the scalar product in the first term can be written as
$$
(i\mu)^{-1}\big\langle\psi,{\textstyle\frac1\nu}\big\{V_{\nu x}DV_{\nu x}^*
\big(\partial_jA_{\nu,\mu}^I\big)(x)-D\big(\partial_jB_{\mu}^I\big)(x)\big\}
\psi\big\rangle
$$
with $D:=(H_0')^{-2}\zeta(H_0'^2)\;\!\eta(H_0)\in\B(\H_0)$. Thus, a further
integration by parts leads to
\begin{align}
L(\nu,\mu)&=-\mu^{-2}\sum_j\int_{\R^d}\underline\d x\,
\big[\partial_j^2(\F f)\big](x)\big\langle\psi,{\textstyle\frac1\nu}
\big\{V_{\nu x}DV_{\nu x}^*A_{\nu,\mu}^I(x)-DB_{\mu}^I(x)\big\}
\psi\big\rangle\label{Hor1}\\
&\qquad-i\mu^{-2}\sum_j\int_{\R^d}\underline\d x\,\big[\partial_j(\F f)\big](x)
\big\langle\psi,V_{\nu x}[D,\Phi_j] V_{\nu x}^*A_{\nu,\mu}^I(x)\psi\big\rangle
\label{Hor2}\\
&\qquad-\mu^{-1}\sum_j\int_{\R^d}\underline\d x\,(\F f)(x)\big\langle\psi,
V_{\nu x}[C_j,\Phi_j]V_{\nu x}^*A_{\nu,\mu}^I(x)\psi\big\rangle\label{Hor3}.
\end{align}
By setting $E_k:=(H_0')^{-4}\zeta(H_0'^2)(\partial_kH_0)\;\!\eta(H_0)\in\B(\H_0)$
and by performing a further integration by parts, one obtains
that \eqref{Hor1} is equal to
\begin{align*}
&i\mu^{-3}\sum_{j,k}\int_{\R^d}\underline\d x\,\big[\partial_j^2(\F f)\big](x)
\big\langle\psi,{\textstyle\frac1\nu}\big\{V_{\nu x}E_k V_{\nu x}^*
\big(\partial_k A_{\nu,\mu}^I\big)(x)-E_k\big(\partial_kB_{\mu}^I\big)(x)\big\}
\psi\big\rangle\\
&=-i\mu^{-3}\sum_{j,k}\int_{\R^d}\underline\d x\,
\big[\partial_k\partial_j^2(\F f)\big](x)\big\langle\psi,{\textstyle\frac1\nu}
\big\{V_{\nu x}E_kV_{\nu x}^*A_{\nu,\mu}^I(x)-E_k B_{\mu}^I(x)\big\}
\psi\big\rangle\\
&\qquad+\mu^{-3}\sum_{j,k}\int_{\R^d}\underline\d x\,\big[\partial_j^2(\F f)\big](x)
\big\langle\psi,V_{\nu x}[E_k,\Phi_k]V_{\nu x}^*A_{\nu,\mu}^I(x)\psi\big\rangle.
\end{align*}
By mimicking the proof of Lemma \ref{batman}, with $\eta(H_0)$ replaced by $E_k$,
one obtains that there exist $\textsc c,s>0$ such that for all $|\mu|>1$,
$x\in\R^d$ and $\nu\in(-1,1)\setminus\{0\}$
$$
\textstyle\big\|\frac1\nu\big\{V_{\nu x}E_kV_{\nu x}^*A_{\nu,\mu}^I(x)
-E_kB_{\mu}^I(x)\big\}\big\|
\le\textsc c\;\!(1+|\mu|)\langle x\rangle^s.
$$
So, the terms \eqref{Hor1} and \eqref{Hor2} can be bounded uniformly in
$\nu\in(-1,1)\setminus\{0\}$ by a function in
$\lone\big(\R\setminus[-1,1],\d\mu\big)$. For the term \eqref{Hor3}, a direct
calculation shows that it can be written as
$$
-i\mu^{-2}\sum_{j,k}\int_\R\underline\d x\,(\F f)(x)
\big\langle V_{\nu x}^*\psi,[C_j,\Phi_j]V_{\nu x}^*C_kV_{\nu x}
\big(\partial_kA_{\nu,-\mu}^{I}\big)(-x)V_{\nu x}^*\psi\big\rangle.
$$
So, doing once more an integration by parts with respect to $x_k$, one also obtains
that this term is bounded uniformly in $\nu\in(-1,1)\setminus\{0\}$ by a function in
$\lone\big(\R\setminus[-1,1],\d\mu\big)$.

The last estimates, together with our previous estimate for $|\mu|\le1$, show that
$|L(\nu,\mu)|$ is bounded uniformly in $|\nu|<1$ by a function in $\lone(\R,\d\mu)$.
So, one can interchange the limit $\nu\searrow0$ and the integration over $\mu$ in
\eqref{second_term}. The interchange of the limit $\nu\searrow0$ and the integration
over $x$ in \eqref{second_term} is justified by the bound obtained in Lemma
\ref{batman}.
\end{proof}

The existence of the usual time delay is now a direct consequence of Theorems
\ref{sym_case} and \ref{equal_sojourn}:

\begin{Theorem}\label{big_one}
Let $H_0$, $f$, $H$ and $\Phi$ satisfy Assumptions \ref{chirimoya}, \ref{commute},
\ref{assumption_f}, \ref{wave} and \ref{chirimoya2}. Let $\varphi\in\H_0^-\cap\D_2$ satisfy
$S\varphi\in\D_2$, \eqref{H-+} and \eqref{grobobo}. Then one has
$$
\lim_{r\to\infty}\tau_r^{\rm in}(\varphi)
=\lim_{r\to\infty}\tau_r(\varphi)
=-\big\langle\varphi,S^*\big[T_f,S\big]\varphi\big\rangle,
$$
with $T_f$ defined by \eqref{nemenveutpas}.
\end{Theorem}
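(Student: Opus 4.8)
The plan is to obtain Theorem~\ref{big_one} as an immediate corollary of the two principal results already at our disposal: Theorem~\ref{sym_case}, which produces the limit of the symmetrized time delay, and Theorem~\ref{equal_sojourn}, which controls the difference of the free sojourn times. The guiding remark is that the symmetrized delay $\tau_r(\varphi)$ and the usual delay $\tau_r^{\rm in}(\varphi)$ are built from the \emph{same} full sojourn term $T_r(\varphi)=T_{r,1}(\varphi)+T_2(\varphi)$ and differ only through the free sojourn times $T_r^0$. Hence everything reduces to understanding $\tau_r^{\rm in}(\varphi)-\tau_r(\varphi)$.

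Concretely, I would first read off from the definitions of $\tau_r$ and $\tau_r^{\rm in}$ that
$$
\tau_r^{\rm in}(\varphi)-\tau_r(\varphi)
=-T_r^0(\varphi)+\12\big\{T_r^0(\varphi)+T_r^0(S\varphi)\big\}
=\12\big\{T_r^0(S\varphi)-T_r^0(\varphi)\big\}.
$$
Because the hypotheses of Theorem~\ref{big_one} supply Assumption~\ref{chirimoya2} and condition \eqref{grobobo}, together with $\varphi\in\H_0^-\cap\D_2$ and $S\varphi\in\D_2$, Theorem~\ref{equal_sojourn} applies verbatim and yields $\lim_{r\to\infty}\big\{T_r^0(S\varphi)-T_r^0(\varphi)\big\}=0$. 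Consequently the displayed difference tends to $0$, i.e. the two time delays share the same asymptotic behaviour.

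It then remains to identify the common limit. The hypotheses of Theorem~\ref{sym_case} --- Assumptions~\ref{chirimoya}, \ref{commute}, \ref{assumption_f}, \ref{wave} and the conditions $\varphi\in\H_0^-\cap\D_2$, $S\varphi\in\D_2$, \eqref{H-+} --- are all among those assumed in Theorem~\ref{big_one}, so $\lim_{r\to\infty}\tau_r(\varphi)$ exists and equals $-\big\langle\varphi,S^*[T_f,S]\varphi\big\rangle$. Adding the vanishing difference from the previous step gives that $\lim_{r\to\infty}\tau_r^{\rm in}(\varphi)$ also exists and coincides with it, which is precisely the asserted chain of equalities.

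I expect no genuine analytic difficulty in this final step: all of the hard work has already been carried out in establishing Theorem~\ref{equal_sojourn} (whose proof rests on the delicate repeated integrations by parts in the variable $x$ needed to produce an $r$-uniform $\lone(\R,\d\mu)$ majorant) and Theorem~\ref{sym_case}. The only thing requiring attention here is bookkeeping, namely verifying that every hypothesis demanded by the two invoked theorems is indeed guaranteed by the assumptions of Theorem~\ref{big_one}; once the hypothesis lists are aligned, the conclusion is purely formal.
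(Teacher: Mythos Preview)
Your proposal is correct and matches the paper's approach exactly: the paper simply states that Theorem~\ref{big_one} is a direct consequence of Theorems~\ref{sym_case} and~\ref{equal_sojourn}, and your argument spells out precisely this reduction via the identity $\tau_r^{\rm in}(\varphi)-\tau_r(\varphi)=\12\{T_r^0(S\varphi)-T_r^0(\varphi)\}$.
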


\begin{Remark}\label{rem_com}
In $\ltwo(\R^d)$, the position operators $Q_j$ and the momentum operators $P_j$ are
related to the free Schr\"odinger operator by the commutation formula
$P_j=i\big[-\12\Delta,Q_j\big]$. Therefore, if one interprets the collection
$\{\Phi_1,\ldots,\Phi_d\}$ as a family of position operators, then it is natural (by
analogy to the Schr\"odinger case) to think of
$H_0'\equiv\big(i[H_0,\Phi_1],\ldots,i[H_0,\Phi_d]\big)$ as a velocity operator for
$H_0$. As a consequence, one can interpret the commutation assumption \eqref{grobobo}
as the conservation of (a function of) the velocity operator $H_0'$ by the scattering
process, and the meaning of Theorem \ref{big_one} reduces to the following: If the
scattering process conserves the velocity operator $H_0'$, then the usual and the
symmetrized time delays are equal.

There are several situations where the commutation assumption \eqref{grobobo} is
satisfied. Here we present three of them:
\begin{enumerate}
\item[(i)] Suppose that $H_0$ is of class $C^1(\Phi)$, and assume that there exists
$v\in\R^d\setminus\{0\}$ such that $H_0'=v$. Then the operator $F_f(H_0')$ reduces
to the scalar $F_f(v)$, and $\big[F_f(H_0'),S\big]=0$ in $\B(\H_0)$. This occurs for
instance in the case of Friedrichs-type and Stark operators (see
\cite[Sec.~7.1]{RT10}).

\item[(ii)] Suppose that $\Phi$ has only one component and that $H_0'=H_0$. Then the
operator $F_f(H_0')\equiv F_f(H_0)$ is diagonalizable in the spectral representation
of $H_0$. We also know that $S$ is decomposable in the spectral representation of
$H_0$. Thus \eqref{grobobo} is satisfied for each $\varphi\in\D_0$, since
diagonalizable operators commute with decomposable operators. This occurs in the case
of $\Phi$-homogeneous operators $H_0$ such as the free Schr\"odinger operator (see
\cite[Sec.~7.2]{RT10} and also \cite[Sec.~10 \& 11]{BG91}).

\item[(iii)] More generally, suppose that $F_f(H_0')$ is diagonalizable in the
spectral representation of $H_0$. Then \eqref{grobobo} is once more satisfied for
each $\varphi\in\D_0$, since diagonalizable operators commute with decomposable
operators. For instance, in the case of the Dirac operator and of dispersive systems
with a radial symbol, we have neither $H_0'=v\in\R^d\setminus\{0\}$, nor $H_0'=H_0$.
But if we suppose $f$ radial, then $F_f(H_0')$ is nevertheless diagonalizable in the
spectral representation of $H_0$ (see \cite[Sec.~7.3]{RT10} and
\cite[Rem.~4.9]{Tie09}).
\end{enumerate}
\end{Remark}

%--------------------------------------------------------------------------------------
%\bibliography{../bibliographie/bibliographie}
%--------------------------------------------------------------------------------------

\end{document}